\documentclass[a4paper,preprintnumbers,floatfix,superscriptaddress,pra,onecolumn,showpacs,notitlepage]{revtex4-1}

\usepackage[utf8]{inputenc}
\usepackage[T1]{fontenc}
\usepackage[sc,osf]{mathpazo}\linespread{1.05}
\usepackage{amsmath, amsthm, amssymb,amsfonts}
\usepackage{graphicx}
\usepackage{dcolumn}
\usepackage{bm}
\usepackage{bbm}
\usepackage{hyperref}
\usepackage{comment}
\usepackage{color}
\usepackage{makecell}

\usepackage{microtype}
\usepackage{framed}


\def\1{\mathbf{1}}
\def\0{\mathbf{0}}




\DeclareMathOperator{\Tr}{Tr}


\newcommand{\ket}[1]{| #1 \rangle}
\newcommand{\bra}[1]{\langle #1 |}

\newcommand{\mean}[1]{\left\langle #1 \right\rangle}

\newtheorem{prop}{Proposition}
\newtheorem{lem}{Lemma}
\newtheorem{theo}{Theorem}

\newtheorem{theorem}[theo]{Theorem}

\newtheorem{lemma}[lem]{Lemma}

\newtheorem{proposition}[prop]{Proposition}

\renewcommand{\rho}{\varrho}

\newcommand{\processnext}[1]{%
  \ifx\listfinish#1\empty\else\listact{#1}\expandafter\processnext\fi}

\newcommand{\figref}[1]{Fig.~\ref{#1}}
\newcommand{\tabref}[1]{Table~\ref{#1}}




\renewcommand{\eqref}[1]{Eq. \ref{#1}}
\renewcommand{\figref}[1]{Fig.~\ref{#1}}
\renewcommand{\tabref}[1]{Table~\ref{#1}}

\newcommand{\theoref}[1]{Theorem~\ref{#1}}

\newcommand{\eq}[1]{
\begin{equation}
#1
\end{equation}
}

\newcommand{\arr}[1]{
\begin{array}{c}
#1
\end{array}
}

{\begin{framed}\begin{small}}
{\end{small}\end{framed}}


\DeclareGraphicsExtensions{.pdf,.png,.jpg}

\makeindex

\begin{document}
\title{
Maximal violation of n-locality inequalities in a star-shaped quantum network
}
\date{\today}

\author{Francesco Andreoli}
\affiliation{Dipartimento di Fisica - Sapienza Universit\`{a} di Roma, P.le Aldo Moro 5, I-00185 Roma, Italy}

\author{Gonzalo Carvacho}
\affiliation{Dipartimento di Fisica - Sapienza Universit\`{a} di Roma, P.le Aldo Moro 5, I-00185 Roma, Italy}

\author{Luca Santodonato}
\affiliation{Dipartimento di Fisica - Sapienza Universit\`{a} di Roma, P.le Aldo Moro 5, I-00185 Roma, Italy}

\author{Rafael Chaves}
\affiliation{International Institute of Physics, Federal University of Rio Grande do Norte, 59070-405 Natal, Brazil}

\author{Fabio Sciarrino}
\email{fabio.sciarrino@uniroma1.it}
\affiliation{Dipartimento di Fisica - Sapienza Universit\`{a} di Roma, P.le Aldo Moro 5, I-00185 Roma, Italy}

\begin{abstract}
Bell's theorem was a cornerstone for our understanding of quantum theory, and the establishment of Bell non-locality played a crucial role in the development of quantum information. Recently, its extension to complex networks has been attracting a growing attention, but a deep characterization of quantum behaviour is still missing for this novel context. In this work we analyze quantum correlations arising in the bilocality scenario, that is a tripartite quantum network where the correlations between the parties are mediated by two independent sources of states. First, we prove that non-bilocal correlations witnessed through a Bell-state measurement in the central node of the network form a subset of those obtainable by means of a separable measurement. This leads us to derive the maximal violation of the bilocality inequality that can be achieved by arbitrary two-qubit quantum states and arbitrary projective separable measurements. We then analyze in details the relation between the violation of the bilocality inequality and the CHSH inequality. Finally, we show how our method can be extended to $n$-locality scenario consisting of $n$ two-qubit quantum states distributed among $n+1$ nodes of a star-shaped network.
\end{abstract}

\maketitle

\section{Introduction} Since its establishment in the early decades of the last century, quantum theory has been elevated to the status of the ``most precisely tested and most successful theory in the history of science'' \cite{kleppner2000one}. And yet, many of its consequences have puzzled -- and still do-- most of the physicists confronted to it. At the heart of many of the counter-intuitive features of quantum mechanics is quantum entanglement \cite{horodecki2009quantum}, nowadays a crucial resource in quantum information and computation \cite{nielsen2002quantum} but that also plays a central role in the foundations of the theory. For instance, as shown by the celebrated Bell's theorem\cite{Bell1964}, quantum correlations between distant parts of an entangled system can violate Bell inequalities, thus precluding its explanation by any local hidden variable (LHV) model, the phenomenon known as quantum non-locality.

Given its fundamental importance and practical applications in the most varied tasks of quantum information \cite{Brunner2014}, not surprisingly many generalizations of Bell's theorem have been pursued over the years. Bell's original scenario involves two distant parties that upon receiving their shares of a joint physical system can measure one out of possible dichotomic observables. Natural generalizations of this simple scenario include more measurements per party \cite{Collins2004} and sequential measurements \cite{Gallego2014}, more measurement outcomes \cite{Collins2002}, more parties \cite{Mermin1990,Werner2001} and also stronger notions of quantum non-locality \cite{Svetlichny1987,Gallego2012,Bancal2013,Chaves2016causal}. All these different generalizations share the common feature that the correlations between the distant parties are assumed to be mediated by a single common source of states (see, for instance, Fig. \ref{fig_DAGS}a). However, as it is often in quantum networks \cite{Kimble2008}, the correlations between the distant nodes is not given by a single source but by many independent sources which distribute entanglement in a non-trivial way across the whole network and generate strong correlations among its nodes (Figs. \ref{fig_DAGS}b-d). Surprisingly, in spite of its clear relevance, such networked scenario is far less explored.

The simplest networked scenario is provided by entanglement swapping \cite{Zukowski1993}, where two distant parties, Alice and Charlie, share entangled states with a central node Bob (see \figref{fig_DAGS}b). Upon measuring in an entangled basis and conditioning on his outcomes, Bob can generate entanglement and non-local correlations among the two other distant parties even though they had no direct interactions. To contrast classical and quantum correlation in this scenario, it is natural to consider classical models consisting of two independent hidden variables (Figs. \ref{fig_DAGS}b), the so-called bilocality assumption \cite{Branciard2010,Branciard2012}. The bilocality scenario and generalizations to networks with an increasing number $n$ of independent sources of states (Figs. \ref{fig_DAGS}d), the so called n-locality scenario \cite{Tavakoli2014,mukherjee2015correlations,Chaves2016PRL,Rosset2016,tavakoli2016quantum,Tavakoli2016,paul2017detection,tavakoli2017correlations} allow for the emergence of a new kind of non-local correlations. For instance, correlations that appear classical according to usual LHV models can display non-classicality if the independence of the sources is taken into account, a result experimentally demonstrated in \cite{carvacho2016experimental,saunders2016experimental}. However, previous works on the topic have mostly focused on developing new tools for the derivation of inequalities characterizing such scenarios and much less attention has been given to understand what are the quantum correlations that can be achieved in such networks.

That is precisely the aim of the present work. We consider in details the bilocality scenario and the bilocality inequality derived in \cite{Branciard2010,Branciard2012} and characterize the non-bilocal behavior of general qubit quantum states when the parties perform different kinds of projective measurements. First of all we show that the correlations arising in an entanglement swapping scenario, i.e. when Bob performs a Bell-state measurement (BSM), form a strict subclass of those correlations which can be achieved by performing separable measurements in all stations. Focusing on this wider class of correlations, we derive a theorem characterizing the maximal violation of the bilocality inequality  \cite{Branciard2010,Branciard2012} that can be achieved from a general two-qubit quantum states shared among the parties. This leads us to obtain a characterization for the violation of the bilocality inequality in relation to the violation of the CHSH inequality \cite{Clauser1969}. Finally we show how our maximization method can be extended to the star network case \cite{Tavakoli2014}, a $n$-partite generalization of the bilocality scenario, deriving thus the maximum violation of the n-locality inequality that can be extracted from this network.

\section{Scenario} In the following we will mostly consider the bilocality scenario, which classical description in terms of directed acyclic graphs (DAGs) is shown in \figref{fig_DAGS}-b. It consists of three spatially separated parties (Alice, Bob and Charlie) whose correlations are mediated by two independent sources of states. In the quantum case, Bob shares two pairs of entangled particles, one with Alice and another with Charlie. Upon receiving their particles Alice, Bob and Charlie perform measurements labelled by the random variables $X$, $Y$ and $Z$ obtaining, respectively, the measurement outcomes $A$, $B$ and $C$. The difference between Bob and the other parties is the fact that the first has in his possession two particles and thus can perform a larger set of measurements including, in particular, measurements in an entangled basis.

\begin{figure}[t]
\centering
\includegraphics[width=\textwidth]{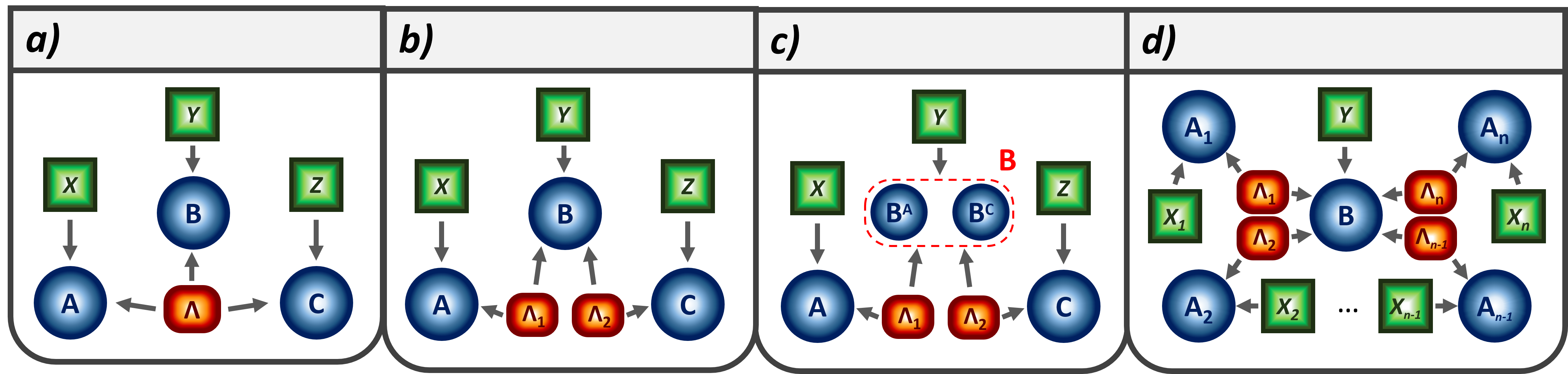}
\caption{\textbf{Description of the causal structure of some different networks.} \textbf{a)}  LHV model representing a tripartite scenario with a single source of states. \textbf{b)} BLHV model describing the bilocality counterpart of an entanglement swapping scenario. \textbf{c)} Causal structure of a bilocality scenario where the separable measurements performed in B are represented by the presence of the two substations $B^A$ and $B^C$. \textbf{d)} Extension of the bilocality scenario to a network consisting of \textit{n} different stations sharing a quantum state with a central node, i.e. the so-called n-local star network.}
\label{fig_DAGS}
\end{figure}

Any probability distribution compatible with the bilocality assumption (i.e. independence of the sources) can be decomposed as 
\eq{
\label{eq:bilocal_set_correlations_definition}
p(a,b,c|x,y,z)= \displaystyle \int d\lambda_1 d\lambda_2 p(\lambda_1) p(\lambda_2) p(a|x,\lambda_1)p(b|y, \lambda_1 ,\;\lambda_2)p(c|z, \lambda_2).}

In particular, if we consider that each party measures two possible dichotomic observables ($x,y,z,a,b,c=0,1$), it follows that any bilocal hidden variable (BLHV) model described by \eqref{eq:bilocal_set_correlations_definition} must fulfill the bilocality inequality
\begin{equation}
\label{eq:bilocality_inequality}
\mathcal{B}=\sqrt{|I|}+\sqrt{|J|} \leq 1,
\end{equation}
with
\eq{
\label{eq:IJ_definition}
\arr{I=\dfrac{1}{4}\displaystyle \sum_{x,z=0,1} \langle A_{x}B_{0}C_{z} \rangle,\;\;\;
 J=\dfrac{1}{4}  \displaystyle \sum_{x,z=0,1} (-1)^{x+z} \langle A_{x}B_{1}C_{z} \rangle,}}
and where
\eq{
\label{eq:mean_ABC_definition}
\langle A_{x}B_{y}C_{z} \rangle=   \displaystyle \sum_{a,b,c=0,1} (-1)^{a+b+c} p(a,b,c|x,y,z).}

As shown in \cite{Branciard2010,Branciard2012}, if we impose the same causal structure to quantum mechanics (e.g. in an entanglement swapping experiment) we can nonetheless violate the bilocality inequality (even though the data might be compatible with LHV models), thus showing the existence of a new form of quantum \textit{non-locality} called quantum \textit{non-bilocality}.

To that aim let us consider the entanglement swapping scenario with an overall quantum state $\ket{\psi^-}_{AB}\otimes\ket{\psi^-}_{BC}$, with $\ket{\psi^-}=(1/\sqrt{2})(\ket{01}-\ket{10})$. We can choose the measurements operators for the different parties in the following way. Stations A and C perform single qubit measurements defined by
\eq{
\label{eq:bilocality_swap_measurement_Branciard}
\arr{A_x=\dfrac{\sigma_z+(-1)^x\sigma_x}{\sqrt{2}},\;\;\;\;\;\;C_z=\dfrac{\sigma_z+(-1)^z\sigma_x}{\sqrt{2}}.
}
}
Station B, instead, performs a complete BSM, assigning to the two bits $b_0b_1$ the values

\eq{
\label{eq:bilocality_swap_measurement_Branciard_B}
\arr{
 00 \;\text{for} \;\ket{\phi^+}, \;\;\;\;\;01 \;\text{for}\; \ket{\phi^-},\;\;\;\;\;
 10 \;\text{for}\; \ket{\psi^+}, \;\;\;\;\;11 \;\text{for} \;\ket{\psi^-}.}
}

The binary measurement $B_y$ is then defined such that it returns $(-1)^{b_y}$, with respect to the value of $y=0,1$. This leads to
\eq{
\label{eq:mean_ABC_definition_swappCase}
\arr{
 \langle A_{x}B_{y}C_{z} \rangle= \displaystyle \sum_{a,b_{0},b_{1},c=0,1} (-1)^{a+b_{y}+c} p(a,b_{0},b_{1},c|x,z)\\\\=
\displaystyle \sum_{a,b_{y},c=0,1} (-1)^{a+b_{y}+c} p(a,b_{y},c|x,z) \equiv \displaystyle \sum_{a,b,c=0,1} (-1)^{a+b+c} p(a,b,c|x,y,z),}}
where, in the last steps, we made explicit use of the marginalization of probability $p(a,b_{0},b_{1},c|x,z)$ over $b_{k\neq y}$.\\
With these state and measurements, the quantum mechanical correlations achieve a value $\mathcal{B}=\sqrt{2}>1$, which violates the bilocality inequality and thus proves quantum non-bilocality.

\section{Results}
\subsection{Non-bilocal correlations with separable measurements}

As reproduced above, in an entanglement swapping scenario QM can exhibit correlations which cannot be reproduced by any BLHV model. In turn, it was recently proved \cite{Rosset2016} that an equivalent form of the bilocality inequality (\eqref{eq:bilocality_inequality}), can be violated by QM in the case where all parties only perform single qubit measurements (i.e. $\sigma_x, \;\sigma_z,\;\sigma_y $ and linear combinations). Here we will prove that, given the bilocality inequality (\eqref{eq:bilocality_inequality}), the non-bilocal correlations arising in an entanglement swapping scenario are a strict subclass of those obtainable by means of separable measurements.

The core of the bilocality parameter $\mathcal{B}$ is the evaluation of the expected value $\mean{A_xB_yC_z}$ (\eqref{eq:mean_ABC_definition}), that in the quantum case is given by 
	\eq{
		\label{eq:QM_mean_value}
		<A_xB_yC_z>=\Tr[(A_x\otimes B_y \otimes C_z)  ( \rho_{AB}\otimes\rho_{BC})].}
		
For the entanglement swapping scenario we can summarize the measurements in stations A and C by
	\eq{
		\label{eq:Ax_and_Cz_definitions}
		\arr{
		A_x=(1-x)\;A_0\; +\; x\;A_1\;\;\;\;\;x=0,1,\\\\
		C_z=(1-z)\;C_0\; +\; z\;C_1\;\;\;\;\;\;z=0,1,}
	}
where $A_x$ and $C_z$ are general single qubit projective measurements with eigenvalues $1$ and $-1$. When dealing with station B, it is suitable to consider its operatorial definition which is implicit in \eqref{eq:mean_ABC_definition_swappCase}. Indeed we can consider that $(-1)^{b_y}$ is the outcome of our measurement, leading to values shown in \tabref{tab:B_swap_values}.

\begin{table}[h!]
\label{tab:tomo}
\centering
\resizebox{0.65\textwidth}{!}{\begin{tabular}{| c | c | c | c | c |}
\hline
\diaghead{ooooooooooo}{$\hspace{0.1cm} y$}{$\vspace{0.2cm} b_0b_1$}  & 00($\phi^+$) & 01($\phi^-$) & 10($\psi^+$) & 11($\psi^-$)\\
\hline
$y=0$    &  1    &   1   &  -1    &  -1  \\
\hline
$y=1$    &  1    &  -1    &   1   &  -1  \\

\hline

\end{tabular}}
\vspace{7pt}
\caption{Expected values for the operator $B_y$, as implicitly defined in \eqref{eq:mean_ABC_definition_swappCase}.}
\label{tab:B_swap_values}
\end{table}

The quantum mechanical description of the operator $B_y$ (in an entanglement swapping scenario) is thus given by	
	\begin{eqnarray}
		\label{eq:By_swapping_definitions}
		B_y= \ket{\phi^+} \bra{\phi^+}+ (1-2y)\ket{\phi^-} \bra{\phi^-}
		+(2y-1)\ket{\psi^+} \bra{\psi^+} -\ket{\psi^-} \bra{\psi^-}
	\end{eqnarray}	
which relates each value of $y=0,\;1$ with its correct set of outcomes. This leads to the following theorem.

\begin{theorem}
[Non-bilocal correlations and separable measurements]
\label{theo:Bilo_sep=ent}
Given the general set of separable measurements
\eq{
\label{eq:B_y_separable_general_form}
B_y=(1-y)\sum_{ij}\lambda_{ij}\;\sigma_i\otimes \sigma_j+y\sum_{kl}\delta_{kl}\;\sigma_k\otimes \sigma_l,}
QM predictions for the bilocality parameter $\mathcal{B}$ which arise in an entanglement swapping scenario (where Bob performs the measurement described in \eqref{eq:By_swapping_definitions}) are completely equivalent to those obtainable by performing a strict subclass of \eqref{eq:B_y_separable_general_form}, i.e.
\eq{\{\mathcal{B}\}_{B.S.M.} \subset \{\mathcal{B}\}_{SEP.M.}.}
\end{theorem}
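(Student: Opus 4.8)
The plan is to observe that, for the sole purpose of evaluating the bilocality parameter, Bob's Bell‑state measurement is indistinguishable from a particular separable measurement. By \eqref{eq:QM_mean_value} the correlators $\mean{A_xB_yC_z}=\Tr[(A_x\otimes B_y\otimes C_z)(\rho_{AB}\otimes\rho_{BC})]$, and hence $\mathcal{B}$, depend on Bob's apparatus \emph{only} through the two Hermitian operators $B_0,B_1$. It therefore suffices to rewrite the operators $B_y$ of \eqref{eq:By_swapping_definitions} in the Pauli basis and recognize that they already have the separable form \eqref{eq:B_y_separable_general_form}, in fact a very restricted instance of it.

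First I would expand the four Bell projectors,
\eq{
\begin{gathered}
\proj{\phi^\pm}=\tfrac{1}{4}\left(\openone\otimes\openone\pm\sigma_x\otimes\sigma_x\mp\sigma_y\otimes\sigma_y+\sigma_z\otimes\sigma_z\right),\\
\proj{\psi^\pm}=\tfrac{1}{4}\left(\openone\otimes\openone\pm\sigma_x\otimes\sigma_x\pm\sigma_y\otimes\sigma_y-\sigma_z\otimes\sigma_z\right),
\end{gathered}
}
and substitute them into \eqref{eq:By_swapping_definitions}. Collecting coefficients, the $\openone\otimes\openone$ and $\sigma_y\otimes\sigma_y$ contributions cancel for both values of $y$, the $\sigma_x\otimes\sigma_x$ term cancels for $y=0$, and the $\sigma_z\otimes\sigma_z$ term cancels for $y=1$, leaving simply
\eq{
B_0=\sigma_z\otimes\sigma_z,\qquad B_1=\sigma_x\otimes\sigma_x.
}
These are legitimate dichotomic projective measurements whose $\pm1$ eigenprojectors, e.g. $\tfrac{1}{2}(\openone\otimes\openone+\sigma_z\otimes\sigma_z)=\proj{00}+\proj{11}$, are separable; they are recovered from \eqref{eq:B_y_separable_general_form} by keeping only the coefficient $\lambda_{zz}=1$ (all other $\lambda$'s zero) and the coefficient $\delta_{xx}=1$ (all other $\delta$'s zero). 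A different bit assignment than the one in \tabref{tab:B_swap_values} would give some other pair among $\pm\sigma_i\otimes\sigma_i$, still of this one‑term‑per‑$y$ kind.

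The inclusion is then immediate: for every choice of $\rho_{AB},\rho_{BC}$ and of the single‑qubit observables $A_x,C_z$, the value of $\mathcal{B}$ obtained in the entanglement‑swapping protocol equals the value obtained in the separable‑measurement protocol in which Bob uses $B_0=\sigma_z\otimes\sigma_z$ and $B_1=\sigma_x\otimes\sigma_x$. Hence $\{\mathcal{B}\}_{B.S.M.}$ coincides with the set of values realized by this one‑term‑per‑$y$ subfamily of \eqref{eq:B_y_separable_general_form}, which is a proper subclass of the full family of separable measurements — the latter also permits $\sigma_y$ components, substation observables aligned along non‑orthogonal axes, and operators that are genuine sums of several Pauli products — giving $\{\mathcal{B}\}_{B.S.M.}\subset\{\mathcal{B}\}_{SEP.M.}$. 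That the inclusion is strict even as a statement about attainable values follows by allowing Bob to adapt $B_0,B_1$ to the shared state, a freedom absent from the fixed BSM assignment but present in \eqref{eq:B_y_separable_general_form}, and exhibiting a resulting $\mathcal{B}$ outside $\{\mathcal{B}\}_{B.S.M.}$; this is in any case a by‑product of the general maximization over separable measurements carried out in the next subsection.

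I do not expect a genuine obstacle: the whole content is the Pauli‑basis identity plus the remark that $\mathcal{B}$ sees Bob's measurement only through $B_0$ and $B_1$. The points needing a little care are the sign bookkeeping when collapsing \eqref{eq:By_swapping_definitions}, spelling out the quantifiers hidden in the notation $\{\mathcal{B}\}_{B.S.M.}$ and $\{\mathcal{B}\}_{SEP.M.}$ — namely that $\rho_{AB},\rho_{BC},A_x,C_z$ range freely on both sides, only Bob's measurement class changing — and, if one reads ``$\subset$'' strictly, producing one explicit state‑plus‑separable‑measurement whose $\mathcal{B}$ cannot be matched by any Bell‑state measurement.
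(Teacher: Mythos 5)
Your proposal is correct and follows essentially the same route as the paper: both reduce the Bell-state-measurement operators of \eqref{eq:By_swapping_definitions} to $B_0=\sigma_z\otimes\sigma_z$ and $B_1=\sigma_x\otimes\sigma_x$ and observe that this (together with its rotated-basis variants) is a restricted instance of \eqref{eq:B_y_separable_general_form}; you merely carry out the algebra directly in the Pauli basis where the paper uses the computational basis. Your added remarks on the quantifiers behind $\{\mathcal{B}\}_{B.S.M.}$ and on what "strict" means are sound but do not change the argument.
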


\begin{proof}
Let us write the Bell basis of a two qubit Hilbert space in terms of the computational basis ($\ket{00},\;\ket{01},\;\ket{10},\;\ket{11}$). From \eqref{eq:By_swapping_definitions}, we obtain

\eq{\arr{
		B_y= \ket{\phi^+} \bra{\phi^+}+ (1-2y)\ket{\phi^-} \bra{\phi^-}+(2y-1)\ket{\psi^+} \bra{\psi^+} -\ket{\psi^-} \bra{\psi^-}\\\\
	=	(1-y)\;(\ket{00} \bra{00}-\ket{01} \bra{01}-\ket{10} \bra{10}+\ket{11} \bra{11})+y\;(\ket{00} \bra{11}+\ket{01} \bra{10}+\ket{10} \bra{01}+\ket{11} \bra{00})\\\\=
(1-y)\;\sigma_z\otimes\sigma_z+y\;\sigma_x\otimes\sigma_x.		
}}

This shows that the entanglement swapping scenario is equivalent to the one where station $B$ only performs the two separable measurements $B_0=\sigma_z\otimes\sigma_z$ and $B_1=\sigma_x\otimes\sigma_x$, 
which form a strict subclass of the general set of separable measurements given by \eqref{eq:B_y_separable_general_form}. Moreover if we consider a rotated Bell basis, then we obtain
\eq{\arr{B_y'=U^{\dagger}_{AB}\otimes U^{\dagger}_{BC} B_y U_{AB}\otimes U_{BC}=
 (1-y)\;U^{\dagger}_{AB} \sigma_x U_{AB} \otimes U^{\dagger}_{BC}\sigma_x U_{BC}+y \;U^{\dagger}_{AB} \sigma_z U_{AB} \otimes U^{\dagger}_{BC}\sigma_z U_{BC}
 \\\\=(1-y)\;\vec{a}\cdot \vec{\sigma}\otimes\vec{c}\cdot \vec{\sigma}+y\; \vec{a}'\cdot \vec{\sigma} \otimes\vec{c}'\cdot \vec{\sigma}
 }}
 where $\vec{\sigma}=(\sigma_x,\sigma_y,\sigma_z)$ and $\vec{a},\;\vec{a}'$ ($\vec{c},\;\vec{c}'$) are orthogonal unitary vectors. Due to the constraints $\vec{a}\perp \vec{a}'$ and $\vec{c}\perp \vec{c}'$, this case still represents a strict subset of \eqref{eq:B_y_separable_general_form}. 
\end{proof}

As it turns out, this theorem has strong implications in our understanding of the non-bilocal behavior of QM. Indeed, it shows how the entanglement swapping scenario is not capable of exploring the whole set of quantum non-bilocal correlations, since it is totally equivalent to a subclass of Bob's separable measurements. As we will show next, a better characterization of quantum correlations within the bilocality context must thus in principle take into account the general form of Bob's separable measurements, especially when dealing with different types of quantum states.

\subsection{Non-bilocality maximization criterion} We will now explore the maximization of the bilocality inequality considering that Bob performs the separable measurements described by \eqref{eq:B_y_separable_general_form}. It is convenient to consider that station B as a unique station composed of the two substations $B^A$ and $B^C$, which perform single qubit measurements on one of the qubits belonging to the entangled state shared, respectively, with station A or C (see \figref{fig_DAGS}-c).\\
Let $A$ perform a general single qubit measurement and similarly for $B^A$, $B^C$ and $C$. We can define these measurements as
\eq{
\label{eq:general_separable_measurements}
\arr{
\text{Station A}\;\longrightarrow\;\;\vec{a}_x\cdot \vec{\sigma},\;\;\;\;
\text{Station B}\;\longrightarrow\;\;\vec{b}^A_y\cdot \vec{\sigma}\otimes\vec{b}^C_y\cdot \vec{\sigma},\;\;\;\;
\text{Station C}\;\longrightarrow\;\;\vec{c}_z\cdot \vec{\sigma},
}}
where $\vec{\sigma}=(\sigma_x,\sigma_y,\sigma_z)$. Let us now define a general 2-qubit quantum state density matrix as
\eq{\label{eq:general_2qubit_state}\rho=\dfrac{1}{4}(\mathbb{I}\otimes \mathbb{I}+\vec{r}\cdot \vec{\sigma}\otimes \mathbb{I}+\mathbb{I}\otimes\vec{s}\cdot \vec{\sigma}+\sum^3_{n,\;m=1}t_{nm}\;\sigma_n \otimes \sigma_m).}
The coefficients $t_{nm}$ can be used to define a real matrix $T_{\rho}$ that lead to the following result:
\begin{lemma}[Bilocality Parameter with Separable Measurements]
\label{theo:B_separable_measurements}
\noindent
Given the set of general separable measurements described in \eqref{eq:general_separable_measurements} and defined the general quantum state $\rho_{AB} \otimes \rho_{BC}$ accordingly to \eqref{eq:general_2qubit_state}, the bilocality parameter $\mathcal{B}$ is given by 
\eq{
\label{eq:B_separable_measurements}
\begin{array}{c}
\mathcal{B}=\dfrac{1}{2}\sqrt{\big| (\vec{a}_0+\vec{a}_1)\cdot T_{\rho_{AB}}\vec{b}^A_0 \big| \;\;\big| \vec{b}^C_0\cdot T_{\rho_{BC}}(\vec{c}_0+\vec{c}_1) \big|}+
\dfrac{1}{2}\sqrt{\big| (\vec{a}_0-\vec{a}_1)\cdot T_{\rho_{AB}}\vec{b}^A_1 \big| \;\;\big| \vec{b}^C_1\cdot T_{\rho_{BC}}(\vec{c}_0-\vec{c}_1) \big|}.
\end{array}
}
\end{lemma}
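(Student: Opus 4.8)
The plan is a direct computation that exploits the product structure of the state together with the separability of Bob's measurement. First I would split Bob's station into the two substations $B^A$ and $B^C$, so that the measurement operator $A_x\otimes B_y\otimes C_z$ acts on the four qubits, ordered $A,B^A,B^C,C$, as $(\vec a_x\cdot\vec\sigma)\otimes(\vec b^A_y\cdot\vec\sigma)\otimes(\vec b^C_y\cdot\vec\sigma)\otimes(\vec c_z\cdot\vec\sigma)$, while $\rho_{AB}\otimes\rho_{BC}$ factorizes across the $AB^A\,|\,B^CC$ cut. The trace in \eqref{eq:QM_mean_value} therefore splits into a product of two-qubit traces,
\eq{
\langle A_xB_yC_z\rangle=\Tr\!\big[(\vec a_x\cdot\vec\sigma\otimes\vec b^A_y\cdot\vec\sigma)\,\rho_{AB}\big]\;\Tr\!\big[(\vec b^C_y\cdot\vec\sigma\otimes\vec c_z\cdot\vec\sigma)\,\rho_{BC}\big].
}

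Next I would evaluate each factor using the Bloch parametrization \eqref{eq:general_2qubit_state}. Since $\Tr[\sigma_i]=0$ and $\Tr[\sigma_i\sigma_j]=2\delta_{ij}$, only the correlation block $\sum_{nm}t_{nm}\sigma_n\otimes\sigma_m$ contributes when contracted against a tensor product of two traceless Pauli vectors, which gives $\Tr[(\vec u\cdot\vec\sigma\otimes\vec v\cdot\vec\sigma)\rho]=\sum_{nm}t_{nm}u_nv_m=\vec u\cdot T_\rho\vec v$. Hence
\eq{
\langle A_xB_yC_z\rangle=\big(\vec a_x\cdot T_{\rho_{AB}}\,\vec b^A_y\big)\,\big(\vec b^C_y\cdot T_{\rho_{BC}}\,\vec c_z\big).
}

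Then I would substitute this into the definitions \eqref{eq:IJ_definition}. In $I$ the value $y=0$ is fixed, so the double sum over $x,z$ factorizes, and by linearity $\sum_x\vec a_x=\vec a_0+\vec a_1$, $\sum_z\vec c_z=\vec c_0+\vec c_1$, giving $I=\tfrac14\big((\vec a_0+\vec a_1)\cdot T_{\rho_{AB}}\vec b^A_0\big)\big(\vec b^C_0\cdot T_{\rho_{BC}}(\vec c_0+\vec c_1)\big)$. For $J$ the value $y=1$ is fixed and the weight $(-1)^{x+z}=(-1)^x(-1)^z$ again lets the sum factorize, producing the difference vectors $\vec a_0-\vec a_1$ and $\vec c_0-\vec c_1$ instead. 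Forming $\mathcal{B}=\sqrt{|I|}+\sqrt{|J|}$ and pulling the factor $\tfrac14$ out as $\tfrac12$ from each square root yields exactly \eqref{eq:B_separable_measurements}.

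I do not expect a genuine obstacle here: the argument is essentially bookkeeping once the product structure is used. The only point that requires care is the correct pairing of Hilbert-space factors — namely that the $B^A$ qubit is the one entangled with $A$ via $\rho_{AB}$ and the $B^C$ qubit the one entangled with $C$ via $\rho_{BC}$ — so that $\rho_{AB}\otimes\rho_{BC}$ really is a product across the $AB^A\,|\,B^CC$ partition and the first trace factorization step is valid. After fixing that ordering, the Bloch-vector identities and the factorization of the $x,z$ sums are routine.
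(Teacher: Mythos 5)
Your proposal is correct and follows essentially the same route as the paper: both rest on the factorization $\mean{A_x \otimes B^A_y \otimes B^C_y \otimes C_z}_{\rho_{AB}\otimes \rho_{BC}}= \mean{A_x \otimes B^A_y}_{\rho_{AB}}\mean{ B^C_y \otimes C_z }_{\rho_{BC}}$ together with the Pauli-trace identity $\Tr[(\vec u\cdot\vec\sigma\otimes\vec v\cdot\vec\sigma)\rho]=\vec u\cdot T_\rho\vec v$, followed by factorizing the sums over $x,z$ in $I$ and $J$. You merely spell out the final bookkeeping (which the paper leaves as a ``direct application''), so there is nothing to correct.
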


\begin{proof}
Let us consider two operators $O_i$ in the form $O_i=\vec{v}_i\cdot \vec{\sigma}$ and a two qubit quantum state $\rho$ described by \eqref{eq:general_2qubit_state}. We can write
\eq{
\label{eq:operators_and_vectors_proof}
\arr{\mean{O_1\otimes O_2}_{\rho}=\Tr[(O_1\otimes O_2)\rho]=\Tr[\displaystyle \sum_{j,k=1,2,3}(v_1^j v_2^k \sigma_j \otimes \sigma_k)\rho]=
\displaystyle \sum_{j,k=1,2,3} v_1^j v_2^k t_{jk}=\vec{v}_1\cdot(T_{\rho}\vec{v}_2),
}
}
where we made use of the properties of the Pauli matrices $\sigma_i$. Given the set of separable measurements described in \eqref{eq:general_separable_measurements}, and the definitions of $I$ and $J$ (showed in \eqref{eq:IJ_definition}), the proof comes from a direct application of \eqref{eq:operators_and_vectors_proof} to the quantum mechanical expectation value:
\eq{\mean{A_x \otimes B^A_y \otimes B^C_y \otimes C_z}_{\rho_{AB}\otimes \rho_{BC}}= \mean{A_x \otimes B^A_y}_{\rho_{AB}}\mean{ B^C_y \otimes C_z }_{\rho_{BC}} .}

\end{proof}

Next we proceed with the maximization of the parameter $\mathcal{B}$ over all possible measurement choices, that is, the maximum violation of bilocality we can achieve with a given set of quantum states.
To that aim, we introduce the following Lemma.

\begin{lemma}
\label{lemma:MMT_MTM}
Given a square matrix $M$ and defined the two symmetric matrices $\mathcal{M}_1=M^\mathbf{T}M$ and $\mathcal{M}_2=MM^\mathbf{T}$, each \textit{non-null} eigenvalue of $\mathcal{M}_1$ is also an eigenvalue of $\mathcal{M}_2$, and \textit{vice versa}.
\end{lemma}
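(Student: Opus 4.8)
The plan is to use the standard \emph{intertwining} trick between $M^\mathbf{T}M$ and $MM^\mathbf{T}$: from an eigenvector of $\mathcal{M}_1$ I manufacture an eigenvector of $\mathcal{M}_2$ with the same eigenvalue simply by applying $M$, and then obtain the converse by symmetry.

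Concretely, I would first take a non-null eigenvalue $\lambda \neq 0$ of $\mathcal{M}_1 = M^\mathbf{T}M$, with eigenvector $v \neq 0$, so that $M^\mathbf{T}M v = \lambda v$. Left-multiplying this identity by $M$ gives $MM^\mathbf{T}(Mv) = \lambda (Mv)$, i.e. $Mv$ is formally an eigenvector of $\mathcal{M}_2 = MM^\mathbf{T}$ with eigenvalue $\lambda$ — provided $Mv \neq 0$. This non-vanishing is exactly where the hypothesis $\lambda \neq 0$ enters, and it is the only point requiring care: if instead $Mv = 0$, then $\lambda v = M^\mathbf{T}M v = M^\mathbf{T}(Mv) = 0$, and since $v \neq 0$ this forces $\lambda = 0$, a contradiction. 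Hence $Mv \neq 0$ and $\lambda$ is a (non-null) eigenvalue of $\mathcal{M}_2$.

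Finally I would note that the whole argument is symmetric under $M \leftrightarrow M^\mathbf{T}$ (using $(M^\mathbf{T})^\mathbf{T} = M$), so running it with $M^\mathbf{T}$ in place of $M$ shows that every non-null eigenvalue of $MM^\mathbf{T}$ is likewise a non-null eigenvalue of $M^\mathbf{T}M$. Combining the two inclusions yields the claimed coincidence of the non-null spectra of $\mathcal{M}_1$ and $\mathcal{M}_2$. There is no real obstacle here; the proof is a few lines, and the only thing to be careful about is justifying $Mv \neq 0$, which is precisely the step that uses non-nullity and explains why the $0$ eigenvalue must be excluded from the statement.
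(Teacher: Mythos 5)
Your proof is correct and follows essentially the same route as the paper's: apply $M$ to the eigenvalue equation $M^\mathbf{T}Mv=\lambda v$ to exhibit $Mv$ as an eigenvector of $MM^\mathbf{T}$, and argue by symmetry for the converse. You are in fact slightly more careful than the paper in spelling out why $\lambda\neq 0$ forces $Mv\neq 0$, which the paper merely asserts.
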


\begin{proof}
Let $\lambda$ be an eigenvalue of $\mathcal{M}_1$
\eq{M^\mathbf{T}M \vec{v}=\lambda\vec{v}.}
If $\lambda \neq 0$ we must have $M\vec{v}\neq \vec{0}$. We can then apply the operator $M$ from the left, obtaining
\eq{MM^\mathbf{T} (M\vec{v})=\lambda(M\vec{v}),}
which shows that $M\vec{v}$ is an eigenvector of $\mathcal{M}_2$ with eigenvalue $\lambda$. 
\\The \textit{opposite} statement can be analogously proved.
\end{proof}

We can now enunciate the main result of this section.

\begin{theorem}[Bilocality Parameter Maximization]
\label{theo:B_maximization_separable}
Given the set of general separable measurements described in \eqref{eq:general_separable_measurements}, the maximum bilocality parameter that can be extracted from a quantum state $\rho_{AB} \otimes \rho_{BC}$ can be written as
\eq{\arr{\label{eq:max_bilocality}
\mathcal{B}_{max}=\sqrt{\sqrt{t^A_1 t^C_1} + \sqrt{t^A_2 t^C_2}},}
}
where $t^A_1$ and $t^A_2$ ($t^C_1$ and $t^C_2$) are the two greater (and positive) eigenvalues of the matrix $T_{\rho_{AB}}^\mathbf{T} T_{\rho_{AB}}$ ($T_{\rho_{BC}}^\mathbf{T} T_{\rho_{BC}}$), with $t^A_1\geq t^A_2$ and $t^C_1\geq t^C_2$.
\end{theorem}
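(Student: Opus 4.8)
The plan is to start from the closed form in Lemma~\ref{theo:B_separable_measurements}, carry out the maximization in stages, and reduce the last stage to a spectral optimization with the help of Lemma~\ref{lemma:MMT_MTM}. First I would maximize over Bob's four single-qubit directions. In the expression of Lemma~\ref{theo:B_separable_measurements} each of $\vec{b}^A_0,\vec{b}^A_1,\vec{b}^C_0,\vec{b}^C_1$ occurs in exactly one factor, so the maximization over them decouples: since they are unit vectors, $|(\vec{a}_0\pm\vec{a}_1)\cdot T_{\rho_{AB}}\vec{b}^A_{0,1}|\le|T_{\rho_{AB}}^{\mathbf{T}}(\vec{a}_0\pm\vec{a}_1)|$ and $|\vec{b}^C_{0,1}\cdot T_{\rho_{BC}}(\vec{c}_0\pm\vec{c}_1)|\le|T_{\rho_{BC}}(\vec{c}_0\pm\vec{c}_1)|$, and all four bounds are attained simultaneously by the aligned choices $\vec{b}^A_0\parallel T_{\rho_{AB}}^{\mathbf{T}}(\vec{a}_0+\vec{a}_1)$, $\vec{b}^A_1\parallel T_{\rho_{AB}}^{\mathbf{T}}(\vec{a}_0-\vec{a}_1)$, $\vec{b}^C_0\parallel T_{\rho_{BC}}(\vec{c}_0+\vec{c}_1)$, $\vec{b}^C_1\parallel T_{\rho_{BC}}(\vec{c}_0-\vec{c}_1)$. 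Writing $\vec{u}_\pm=\vec{a}_0\pm\vec{a}_1$ and $\vec{w}_\pm=\vec{c}_0\pm\vec{c}_1$, this collapses $\mathcal{B}$ to $\tfrac12\sqrt{|T_{\rho_{AB}}^{\mathbf{T}}\vec{u}_+|\,|T_{\rho_{BC}}\vec{w}_+|}+\tfrac12\sqrt{|T_{\rho_{AB}}^{\mathbf{T}}\vec{u}_-|\,|T_{\rho_{BC}}\vec{w}_-|}$.

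Next I would use the geometry of the sum/difference vectors: $|\vec{a}_0|=|\vec{a}_1|=1$ forces $\vec{u}_+\cdot\vec{u}_-=0$ and $|\vec{u}_+|^2+|\vec{u}_-|^2=4$, and conversely every orthogonal pair with squared norms summing to $4$ is realized by the unit vectors $\vec{a}_0=(\vec{u}_++\vec{u}_-)/2$, $\vec{a}_1=(\vec{u}_+-\vec{u}_-)/2$; likewise for $\vec{w}_\pm$. So I set $\vec{u}_+=2\cos\alpha\,\hat{u}_+$, $\vec{u}_-=2\sin\alpha\,\hat{u}_-$ with $\hat{u}_+\perp\hat{u}_-$ unit and $\alpha\in[0,\pi/2]$, and similarly $\hat{w}_\pm$, $\beta$. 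Using $|T_{\rho_{AB}}^{\mathbf{T}}\hat{u}|^2=\hat{u}^{\mathbf{T}}(T_{\rho_{AB}}T_{\rho_{AB}}^{\mathbf{T}})\hat{u}$ together with Lemma~\ref{lemma:MMT_MTM}, which tells us $T_{\rho_{AB}}T_{\rho_{AB}}^{\mathbf{T}}$ and $T_{\rho_{AB}}^{\mathbf{T}}T_{\rho_{AB}}$ share the same spectrum (both $3\times3$, symmetric, positive semidefinite and of equal rank), $\mathcal{B}$ becomes $\sqrt{\cos\alpha\cos\beta}\,(p_+q_+)^{1/4}+\sqrt{\sin\alpha\sin\beta}\,(p_-q_-)^{1/4}$, where $p_\pm=\hat{u}_\pm^{\mathbf{T}}(T_{\rho_{AB}}^{\mathbf{T}}T_{\rho_{AB}})\hat{u}_\pm$ and $q_\pm$ are the analogous quadratic forms for $T_{\rho_{BC}}^{\mathbf{T}}T_{\rho_{BC}}$.

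The remaining maximization, over the orthonormal frames $\{\hat{u}_+,\hat{u}_-\}$, $\{\hat{w}_+,\hat{w}_-\}$ and the angles $\alpha,\beta$ jointly, is the step I expect to be the main obstacle, because the optimal frames and the optimal angles are coupled. For fixed frames the angle part is clean: by Cauchy--Schwarz $\mathcal{B}\le\sqrt{\cos(\alpha-\beta)}\,\sqrt{\sqrt{p_+q_+}+\sqrt{p_-q_-}}\le\sqrt{\sqrt{p_+q_+}+\sqrt{p_-q_-}}$, with equality at $\alpha=\beta$, $\tan^{2}\alpha=\sqrt{p_-q_-/(p_+q_+)}$. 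The natural candidate for the frames is to align $\hat{u}_+,\hat{u}_-$ with the eigenvectors of $T_{\rho_{AB}}^{\mathbf{T}}T_{\rho_{AB}}$ for its two largest eigenvalues $t^A_1\ge t^A_2$ and $\hat{w}_+,\hat{w}_-$ with those of $T_{\rho_{BC}}^{\mathbf{T}}T_{\rho_{BC}}$ for $t^C_1\ge t^C_2$, giving $p_+=t^A_1$, $p_-=t^A_2$, $q_+=t^C_1$, $q_-=t^C_2$ and hence $\max_\alpha[\cos\alpha\,(t^A_1t^C_1)^{1/4}+\sin\alpha\,(t^A_2t^C_2)^{1/4}]=\sqrt{\sqrt{t^A_1t^C_1}+\sqrt{t^A_2t^C_2}}=\mathcal{B}_{max}$. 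Proving rigorously that this eigen-frame is globally optimal for the coupled problem is the delicate point; I would attack it either through a Lagrange-multiplier analysis of the stationarity conditions of the full objective (checking that the eigen-frame critical point dominates) or through a Ky-Fan/majorization control of $\hat{u}_+^{\mathbf{T}}(T_{\rho_{AB}}^{\mathbf{T}}T_{\rho_{AB}})\hat{u}_++\hat{u}_-^{\mathbf{T}}(T_{\rho_{AB}}^{\mathbf{T}}T_{\rho_{AB}})\hat{u}_-$ together with the corresponding $BC$ sum, and then confirm attainability by exhibiting the explicit generalized-CHSH measurements built along the chosen eigen-directions with the $\alpha=\beta$ above.
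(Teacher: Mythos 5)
Your reduction tracks the paper's own proof almost step for step: the paper also optimizes Bob's four directions by alignment, introduces the same orthogonal sum/difference parametrization with angles $\alpha,\gamma$, eliminates the angles (by setting partial derivatives to zero rather than your cleaner Cauchy--Schwarz bound, but reaching the same condition $\tan^2\alpha=\tan^2\gamma$), and then arrives at exactly the frame optimization you single out at the end. The step you flag as ``the delicate point'' --- proving that the orthonormal frames built from the top-two eigenvectors are globally optimal --- is precisely the step the paper disposes of by asserting that its Lagrange-multiplier problem is ``solved by vectors with two null components.'' So you have correctly located the crux; the problem is that neither of your two proposed routes (stationarity analysis or Ky--Fan majorization) can close it, because the eigenframe ansatz is not globally optimal. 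The stationarity conditions only force $\vec{n}_A,\vec{n}'_A$ to span an invariant subspace of $T_{\rho_{AB}}T^{\mathbf{T}}_{\rho_{AB}}$; within such a $2$-plane with eigenvalues $\lambda_1\geq\lambda_2$ and weights $k_1,k_2$, there is an additional non-eigenvector critical point whenever $k_1^2/k_2^2$ lies between $\lambda_2/\lambda_1$ and $\lambda_1/\lambda_2$, and there the objective equals $\sqrt{(k_1^2+k_2^2)(\lambda_1+\lambda_2)}\geq k_1\sqrt{\lambda_1}+k_2\sqrt{\lambda_2}$ by Cauchy--Schwarz: concavity of $\sqrt{\cdot}$ rewards spreading the quadratic form over both orthogonal directions. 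A Ky--Fan bound on $p_++p_-$ cannot rescue the claim for the same reason.

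This is not a repairable technicality: the formula \eqref{eq:max_bilocality} itself fails as a maximum. Take $\rho_{AB}=\tfrac12(\ket{00}\bra{00}+\ket{11}\bra{11})$, so $T_{\rho_{AB}}=\mathrm{diag}(0,0,1)$ and $t^A_1=1$, $t^A_2=0$, together with $\rho_{BC}=\ket{\psi^-}\bra{\psi^-}$, so $t^C_1=t^C_2=1$. Then \eqref{eq:max_bilocality} predicts $\mathcal{B}_{max}=1$, yet the admissible choices $A_0=C_0=\sigma_z$, $A_1=C_1=\sigma_x$, $B^A_0=B^A_1=\sigma_z$, $B^C_0=-(\sigma_z+\sigma_x)/\sqrt{2}$, $B^C_1=(\sigma_x-\sigma_z)/\sqrt{2}$ give $I=J=1/(2\sqrt{2})$ and hence $\mathcal{B}=2^{1/4}>1$ (this corresponds to the $45^{\circ}$-rotated frame $\hat{u}_\pm=(\hat{z}\pm\hat{x})/\sqrt{2}$, for which $P_+Q_++P_-Q_-=\sqrt{2}$ instead of the eigenframe value $1$). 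Any correct completion of your argument must therefore admit the non-eigenvector critical points and will not reproduce \eqref{eq:max_bilocality} in general; the honest caveat in your last paragraph is exactly where both your sketch and the published proof break down. (A separate minor slip: since $\lvert T^{\mathbf{T}}_{\rho_{AB}}\hat{u}\rvert^2=\hat{u}^{\mathbf{T}}(T_{\rho_{AB}}T^{\mathbf{T}}_{\rho_{AB}})\hat{u}$, your $p_\pm$ should be quadratic forms of $T_{\rho_{AB}}T^{\mathbf{T}}_{\rho_{AB}}$ rather than $T^{\mathbf{T}}_{\rho_{AB}}T_{\rho_{AB}}$; Lemma~\ref{lemma:MMT_MTM} equates the spectra but not the eigenvectors.)
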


\begin{proof}
We will prove \theoref{theo:B_maximization_separable}, following a scheme similar to the one used by Horodecki \cite{Horodecki1995} for the CHSH inequality. Let us introduce the two pairs of mutually orthogonal vectors
\eq{
\label{eq:orthogonal_vectors_changes}\arr{
(\vec{a}_0+\vec{a}_1)=2\; \cos \alpha\; \vec{n}_A \;\;\;\&\;\;\; (\vec{a}_0-\vec{a}_1)=2\; \sin \alpha\; \vec{n}'_A,\\\\
(\vec{c}_0+\vec{c}_1)=2\; \cos \gamma\; \vec{n}_C \;\;\;\&\;\;\; (\vec{c}_0-\vec{c}_1)=2\; \sin \gamma\; \vec{n}'_C,
}}

and let us apply \eqref{eq:orthogonal_vectors_changes} to \eqref{eq:B_separable_measurements}

\eq{
\label{eq:B_max_proof}
\arr{

\mathcal{B}_{max}=\max \big(\sqrt{\big|  \big( \vec{n}_A \cdot T_{\rho_{AB}} \vec{b}^A_0 \big)\;  \big(\vec{b}^C_0\cdot T_{\rho_{BC}} \vec{n}_C \big)\;\cos\alpha\;\cos\gamma  \big|}+
\sqrt{\big|\big( \vec{n}'_A \cdot T_{\rho_{AB}} \vec{b}^A_1 \big)\;  \big( \vec{b}^C_1\cdot T_{\rho_{BC}} \vec{n}'_C \big)\; \sin\alpha\;\sin\gamma\big|}\big)}}

\eq{\arr{
\nonumber
=\max \big(\sqrt{\big|  \big( \vec{b}^A_0\cdot T^{\mathbf{T}}_{\rho_{AB}}\vec{n}_A \big)\;  \big(\vec{b}^C_0\cdot T_{\rho_{BC}} \vec{n}_C \big)\;\cos\alpha\;\cos\gamma  \big|}+
\sqrt{\big|\big(\vec{b}^A_1 \cdot T^{\mathbf{T}}_{\rho_{AB}}\vec{n}'_A  \big)\;  \big( \vec{b}^C_1\cdot T_{\rho_{BC}} \vec{n}'_C \big)\; \sin\alpha\;\sin\gamma\big|}\big),
}}
where the maximization is done over the variables $\vec{n}_A,\;\vec{n}'_A,\;\vec{b}^A_0,\;\vec{b}^A_1,\;
\vec{n}_C,\;\vec{n}'_C,\;\vec{b}^C_0,\;\vec{b}^C_1,\;\alpha$ and $\gamma$. We can choose $\vec{b}^A_0,\;\vec{b}^A_1,\;\vec{b}^C_0,$ and $\vec{b}^C_1 $ so that they maximize the scalar product. Defining 
\eq{|| M\vec{v}||^2=M\vec{v}\cdot M\vec{v}=\vec{v}\cdot M^{\mathbf{T}}M\vec{v},}
and remembering that $\vec{b}^A_0,\;\vec{b}^A_1,\;\vec{b}^C_0,$ and $\vec{b}^C_1 $ are unitary vectors, we obtain

\eq{\arr{
\mathcal{B}_{max}=\max\big(
\sqrt{|| T^{\mathbf{T}}_{\rho_{AB}}\vec{n}_A ||\; || T_{\rho_{BC}} \vec{n}_C||\;\big| \cos\alpha\;\cos\gamma  \big|}+
\sqrt{ || T^{\mathbf{T}}_{\rho_{AB}}\vec{n}'_A ||\;  || T_{\rho_{BC}} \vec{n}'_C ||\; \big| \sin\alpha\;\sin\gamma\big|}\big).
}}

Next we have to choose the optimum variables variables $\alpha$ and $\gamma$. This leads to the set of equations
\eq{
\label{eq:colored_white_imperfect_e_partial_derivatives_2values}
\begin{array}{l}
\dfrac{\partial \mathcal{B}(\alpha,\gamma)}{\partial \alpha}=\dfrac{1}{2}\sqrt{|| T^{\mathbf{T}}_{\rho_{AB}}\vec{n}'_A ||\; || T_{\rho_{BC}} \vec{n}'_C||\; \big|\sin(\alpha)\sin(\gamma)\big|}\;\cot(\alpha)-\dfrac{1}{2}\sqrt{|| T^{\mathbf{T}}_{\rho_{AB}}\vec{n}'_A ||\; || T_{\rho_{BC}} \vec{n}'_C||\; \big|\cos(\alpha)\cos(\gamma)\big|}\;\tan(\alpha)=0,\\\\
\dfrac{\partial \mathcal{B}(\alpha,\gamma)}{\partial \gamma}=\dfrac{1}{2}\sqrt{|| T^{\mathbf{T}}_{\rho_{AB}}\vec{n}'_A ||\; || T_{\rho_{BC}} \vec{n}'_C||\; \big|\sin(\alpha)\sin(\gamma)\big|}\;\cot(\gamma)-\dfrac{1}{2}\sqrt{|| T^{\mathbf{T}}_{\rho_{AB}}\vec{n}'_A ||\; || T_{\rho_{BC}} \vec{n}'_C||\; \big|\cos(\alpha)\cos(\gamma)\big|}\;\tan(\gamma)=0.
\end{array}
}
This system of equations admits only solutions constrained by
\eq{
\label{eq:colored_white_imperfect_e_partial_derivatives_2values_sol}
\tan(\alpha)^2=\tan(\gamma)^2\;\;\leftrightarrow\;\;\gamma = \pm \alpha + n\pi \;,\;n\in\mathbb{Z},
}
leading to
\eq{\label{eq:colored_imperfect_e_simplify}
\begin{array}{c}
\mathcal{B}_{max}=\max\big( |\cos \alpha|\sqrt{|| T^{\mathbf{T}}_{\rho_{AB}}\vec{n}_A ||\; || T_{\rho_{BC}} \vec{n}_C|| }+| \sin\alpha|\sqrt{|| T^{\mathbf{T}}_{\rho_{AB}}\vec{n}'_A ||\; || T_{\rho_{BC}} \vec{n}'_C|| }\big)
\\\\=\max\big(

\sqrt{|| T^{\mathbf{T}}_{\rho_{AB}}\vec{n}_A ||\; || T_{\rho_{BC}} \vec{n}_C|| + || T^{\mathbf{T}}_{\rho_{AB}}\vec{n}'_A ||\;  || T_{\rho_{BC}} \vec{n}'_C ||}\big)
\end{array}}

Next, we must take into account the constraints $\vec{n}_A \perp \vec{n}'_A$ and $\vec{n}_C \perp \vec{n}'_C$. Since these two couples of vectors are, however, independent, we can proceed with a first maximization which deals only with the two set of variables $\vec{n}_A$ and $\vec{n}'_A$. Since $T_{\rho_{AB}}T^{\mathbf{T}}_{\rho_{AB}}$ is a symmetric matrix, it is diagonalizable. Let us call $\lambda_1,\;\lambda_2$ and $\lambda_3$ its eigenvalues and let us write $\vec{n}_A$ and $\vec{n}'_A$ in an eigenvector basis. If we define $k_1=|| T_{\rho_{BC}} \vec{n}_C||>0$ and $k_2=|| T_{\rho_{BC}} \vec{n}'_C ||>0$, our problem can be written in terms of Lagrange multipliers related to the maximization of a function $f$, given the constraints $g_i$

\eq{
\label{eq:lagrange_problem_bilo}
\arr{
f(\vec{n}_A,\;\vec{n}'_A)=k_1\;\sqrt{ \displaystyle \sum_{i=1,2,3} \lambda_i (n^i_A)^2 }\; + \; k_2\; \sqrt{\displaystyle \sum_{i=1,2,3} \lambda_i (n'^i_A)^2},\\\\
g_1(\vec{n}_A)=\vec{n}_A\cdot\vec{n}_A-1,\;\;\;\;\;
g_2(\vec{n}'_A)=\vec{n}'_A\cdot\vec{n}'_A-1,\;\;\;\;\;
g_3(\vec{n}_A,\;\vec{n}'_A)=\vec{n}_A\cdot\vec{n}'_A,}}
where we considered that finding the values that maximize $\sqrt{|f(x)|}$ is equivalent to find these values for $|f(x)|$. Let us now introduce the scaled vectors $\vec{\eta}_A=k_1\;\vec{n}_A$ and $\vec{\eta}\;'_A=k_2\;\vec{n}\;'_A$. We obtain

\eq{\arr{
f(\vec{\eta}_A,\;\vec{\eta}\;'_A)= \sqrt{\displaystyle \sum_{i=1,2,3} \lambda_i (\eta^i_A)^2} +\sqrt{\displaystyle \sum_{i=1,2,3} \lambda_i (\eta\;'^i_A)^2},\\\\
g_1(\vec{\eta}_A)=\vec{\eta}_A\cdot\vec{\eta}_A-(k_1)^2,\;\;\;\;\;
g_2(\vec{\eta}\;'_A)=\vec{\eta}\;'_A\cdot\vec{\eta}\;'_A-(k_2)^2,\;\;\;\;\;
g_3(\vec{\eta}_A,\;\vec{\eta}\;'_A)=\vec{\eta}_A\cdot\vec{\eta}\;'_A,}}
whose solution is given by vectors with two null components, out of three. If we define $\lambda_1\geq\lambda_2\geq\lambda_3$ and if $k_1>k_2$, the solution related to the maximal value is then given by 
\eq{
f_{max}=k_1 \sqrt{\lambda_1} \; + \; k_2\; \sqrt{\lambda_2}
}
which leads to
\eq{
\mathcal{B}_{max}=\max_{\vec{n}_C,\;\vec{n}'_C}\big(
\sqrt{|| T_{\rho_{BC}} \vec{n}_C|| \;\sqrt{t^A_1}\;+ \;  || T_{\rho_{BC}} \vec{n}'_C ||\;\sqrt{t^A_2}}\big),
}
where we made use of the Lemma \ref{lemma:MMT_MTM}.\\
The maximization over the last two variables leads to an analogous Lagrange multipliers problem with similar solutions, thus proving the theorem.
\end{proof}
This theorem generalizes the results of \cite{Mukherjee2016} (which dealt with some particular classes of quantum states in the entanglement swapping scenario) to the more generic case of any quantum state in the separable measurements scenario (which, in a bilocality context, includes the correlations obtained through entanglement swapping). It represents an extension of the Horodecki criterion \cite{Horodecki1995} to the bilocality scenario, taking into account the general class of separable measurements which can be performed in station B. Our result thus shows that as far as we are concerned with the optimal violations of the bilocality inequality provided by given quantum states, separable measurements or a BSM (in the right basis) are fully equivalent.

\subsection{The relation between the non-bilocality and non-locality of sources}
We will now characterize quantum non-bilocal behaviour with respect to the usual non-locality of the states shared between A, B and B, C. Let us start from \eqref{eq:max_bilocality} and separately consider Bell non-locality of the states $\rho_{AB}$ and $\rho_{BC}$. We can quantify it by evaluating the greatest CHSH inequality violation that can be obtained with these states. Let us define the CHSH inequality as
\eq{
\label{eq:CHSH}
\begin{array}{l}
\mathcal{S}^{UV} \equiv \dfrac{1}{2}|\mean{U_0V_0+U_0V_1+U_1V_0-U_1V_1}|\leq 1.
\end{array}}

If we apply the criterion by Horodecki \textit{et al.} \cite{Horodecki1995}, we obtain

\eq{
\label{eq:max_CHSH}
\begin{array}{l}
\mathcal{S}^{AB}_{max}=\sqrt{t^A_1+t^A_2},\;\;\;\;\;
\mathcal{S}^{BC}_{max}=\sqrt{t^C_1+t^C_2},
\end{array}}

where we defined $t^A_1,\;t^A_2,\;t^C_1$ and $t^C_2$ accordingly to \eqref{eq:max_bilocality}. From a direct comparison of \ref{eq:max_bilocality} and \ref{eq:max_CHSH} we can write

\begin{proposition}
\label{prop:loc_means_biloc}
\eq{\mathcal{S}^{AB}_{max}\leq 1 \;\;\& \;\;\mathcal{S}^{BC}_{max}\leq 1 \;\;\longrightarrow \;\;\mathcal{B}_{max}\leq 1.} 
\end{proposition}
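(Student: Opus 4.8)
The plan is to reduce the statement to the closed-form expressions already at hand: $\mathcal{B}_{max}=\sqrt{\sqrt{t^A_1 t^C_1}+\sqrt{t^A_2 t^C_2}}$ from \theoref{theo:B_maximization_separable}, together with $\mathcal{S}^{AB}_{max}=\sqrt{t^A_1+t^A_2}$ and $\mathcal{S}^{BC}_{max}=\sqrt{t^C_1+t^C_2}$ from \eqref{eq:max_CHSH}. First I would rewrite the two hypotheses in eigenvalue form: since $T_{\rho_{AB}}^{\mathbf{T}}T_{\rho_{AB}}$ and $T_{\rho_{BC}}^{\mathbf{T}}T_{\rho_{BC}}$ are positive semidefinite, the four numbers $t^A_1,t^A_2,t^C_1,t^C_2$ are nonnegative, so $\mathcal{S}^{AB}_{max}\leq 1$ and $\mathcal{S}^{BC}_{max}\leq 1$ are simply $t^A_1+t^A_2\leq 1$ and $t^C_1+t^C_2\leq 1$.

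The core of the argument is then one application of the Cauchy--Schwarz inequality in $\RR^2$ to the vectors $(\sqrt{t^A_1},\sqrt{t^A_2})$ and $(\sqrt{t^C_1},\sqrt{t^C_2})$:
\[
\sqrt{t^A_1 t^C_1}+\sqrt{t^A_2 t^C_2}\;\leq\;\sqrt{t^A_1+t^A_2}\;\sqrt{t^C_1+t^C_2}\;\leq\;1 .
\]
Taking the (monotone) square root of both sides yields $\mathcal{B}_{max}\leq 1$, which is exactly the asserted implication. If one cares about tightness, equality forces $t^A_1 t^C_2=t^A_2 t^C_1$ together with $t^A_1+t^A_2=t^C_1+t^C_2=1$, i.e. both shared states saturate the CHSH bound with matching eigenvalue ratios.

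I do not expect a genuine obstacle here: the only point needing a line of justification is the nonnegativity of the $t$'s, so that all square roots are real and the monotonicity step is legitimate, and this is immediate since $M^{\mathbf{T}}M$ is positive semidefinite (invoked already via Lemma~\ref{lemma:MMT_MTM}). I would also add a short remark that only this direction holds and the converse fails: $\mathcal{B}_{max}$ can be well below $1$ even when, say, $\rho_{AB}$ violates CHSH maximally, because a weakly correlated $\rho_{BC}$ (small $t^C_1,t^C_2$) drags the product down — the precise asymmetry that makes non-bilocality a genuinely distinct resource from bipartite non-locality.
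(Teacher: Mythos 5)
Your proof is correct and is essentially identical to the paper's: the paper likewise applies the Cauchy--Schwarz inequality to obtain $\mathcal{B}^2_{max}\leq \mathcal{S}^{AB}_{max}\,\mathcal{S}^{BC}_{max}\leq 1$, which is exactly your bound $\sqrt{t^A_1 t^C_1}+\sqrt{t^A_2 t^C_2}\leq\sqrt{t^A_1+t^A_2}\sqrt{t^C_1+t^C_2}$. Your added remarks on the nonnegativity of the $t$'s and on the equality case are fine but not needed beyond what the paper states.
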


\begin{proof} Applying the Cauchy-Schwarz inequality we obtain 
\eq{\mathcal{B}^2_{max}\leq \mathcal{S}^{AB}_{max} \mathcal{S}^{BC}_{max}\leq 1.}
\end{proof}

\begin{figure}
\centering
\includegraphics[width=\textwidth]{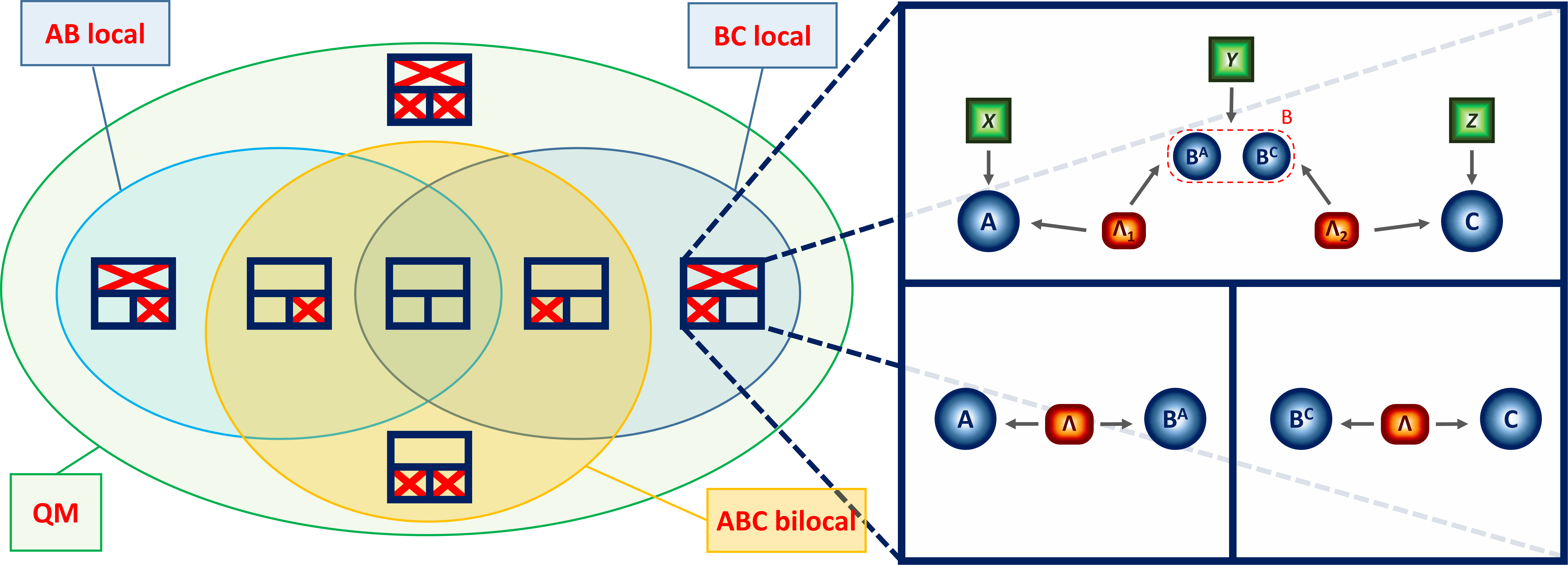}
\caption{\textbf{Venn’s diagram representing quantum correlations in a bilocality scenario.} Possible quantum correlations that may be witnessed given a quantum state $\rho_{AB}\otimes\rho_{BC}$. The blue sets represent quantum states that do not violation the CHSH inequality for $\rho_{AB}$ (AB local) or $\rho_{BC}$ (BC local). The orange set includes, instead, these states whose correlations do not violate the bilocality inequality, while the whole set of quantum correlations is represented in green. For all different regions a blue square shows those decompositions which are not allowed (crossed with red lines), accordingly to the greater square on the right.}
\label{fig_Exp2_New_Loc-Biloc}
\end{figure}

This result shows that if the two sources cannot violate the CHSH inequality then they will also not violate the bilocality inequality. Thus, in this sense, if our interest is to check the non-classical behaviour of sources of states, it is just enough to check for CHSH violations (at least if Bob performs a BSM or separable measurements). Notwithstanding, we highlight that this does not mean that the bilocality inequality is useless, since there are probability distributions that violate the bilocality inequality but nonetheless are local according to a LHV model and thus cannot violate any usual Bell inequality.

Next we consider the reverse case: is it possible to have quantum states that can violate the CHSH inequality but cannot violate the bilocality inequality? That turns out to be the case. To illustrate this phenomenon, we start considering two Werner states in the form $\rho=v(\ket{\psi^-}\bra{\psi^-})+(1-v)\mathbb{I}/4$. In this case, indeed, in order to have a non-local behaviour between A and B (B and C) we must have $v_{AB}> 1/\sqrt{2}$ ($v_{BC}> 1/\sqrt{2}$) while it is sufficient to have $\sqrt{v_{AB} v_{BC}}> 1/\sqrt{2}$ in order to witness non-bilocality. This example shows that on one hand it might be impossible to violate the bilocality inequality although one of $\rho_{AB}$ or $\rho_{BC}$ is Bell non-local (for instance $v_A=1$ and $v_C=0$). It also shows that, when one witnesses non-locality for only one of the two states, it can be possible, at the same time, to have non-bilocality by considering the entire network (for instance $v_A=1$ and $1/2<v_C<1/\sqrt{2}$).

Another possibility is the one described by the following Proposition
\begin{proposition}
Given a tripartite scenario
\eq{\exists\;\rho_{AB}\;\&\;\rho_{BC}\;\;\text{such that}\;\;\;\mathcal{S}^{AB}_{max}> 1,\; \;\mathcal{S}^{BC}_{max}> 1\; \;\&\;\;\mathcal{B}_{max}\leq 1.}
\end{proposition}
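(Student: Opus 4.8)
The plan is to prove this by exhibiting an explicit pair of states. By \theoref{theo:B_maximization_separable}, $\mathcal{B}_{max}$ depends on $\rho_{AB}$ and $\rho_{BC}$ only through the two largest eigenvalues $t^A_1\geq t^A_2$ and $t^C_1\geq t^C_2$ of $T_{\rho_{AB}}^\mathbf{T}T_{\rho_{AB}}$ and $T_{\rho_{BC}}^\mathbf{T}T_{\rho_{BC}}$, while by the Horodecki criterion $\mathcal{S}^{AB}_{max}=\sqrt{t^A_1+t^A_2}$ and $\mathcal{S}^{BC}_{max}=\sqrt{t^C_1+t^C_2}$. Hence it suffices to produce admissible eigenvalue data with $t^A_1+t^A_2>1$, $t^C_1+t^C_2>1$ but $\sqrt{t^A_1 t^C_1}+\sqrt{t^A_2 t^C_2}\leq 1$, and to realize them by genuine density matrices. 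The key observation is that the first two requirements force $t^A_1>1/2$ and $t^C_1>1/2$ (since $2t^A_1\geq t^A_1+t^A_2>1$), so the pair $(\rho_{AB},\rho_{BC})$ cannot be symmetric; the trick is to make one state a \emph{weakly} entangled state, whose $T^\mathbf{T}T$ has a dominant eigenvalue $1$ and a tiny second eigenvalue, and the other state an \emph{isotropically} entangled state just above the CHSH threshold.

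Concretely, I would take $\rho_{AB}=\proj{\chi}$ with $\ket{\chi}=\cos\theta\,\ket{00}+\sin\theta\,\ket{11}$ and $\theta\in(0,\pi/4)$. A direct computation of $\langle\sigma_i\otimes\sigma_i\rangle$ gives $T_{\rho_{AB}}=\mathrm{diag}(\sin 2\theta,-\sin 2\theta,1)$, so the two largest eigenvalues of $T_{\rho_{AB}}^\mathbf{T}T_{\rho_{AB}}$ are $t^A_1=1$ and $t^A_2=\sin^2 2\theta$, whence $\mathcal{S}^{AB}_{max}=\sqrt{1+\sin^2 2\theta}>1$ (Gisin's theorem). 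For $\rho_{BC}$ I would take a Werner state $\rho_{BC}=v\,\ket{\psi^-}\bra{\psi^-}+(1-v)\mathbb{I}/4$, for which $T_{\rho_{BC}}=-v\,\mathbb{I}$, so $t^C_1=t^C_2=v^2$ and $\mathcal{S}^{BC}_{max}=\sqrt{2}\,v>1$ precisely when $v>1/\sqrt{2}$.

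Substituting into \eqref{eq:max_bilocality} yields $\mathcal{B}_{max}^2=\sqrt{v^2}+\sqrt{v^2\sin^2 2\theta}=v(1+\sin 2\theta)$, so $\mathcal{B}_{max}\leq 1$ as soon as $v\leq 1/(1+\sin 2\theta)$; this is compatible with $v>1/\sqrt{2}$ whenever $1+\sin 2\theta<\sqrt{2}$, i.e. $\sin 2\theta<\sqrt{2}-1$. Choosing for instance $\theta$ with $\sin 2\theta=1/5$ and $v=4/5$, one checks $\mathcal{S}^{AB}_{max}=\sqrt{26}/5>1$, $\mathcal{S}^{BC}_{max}=\sqrt{32}/5>1$ and $\mathcal{B}_{max}=\sqrt{24}/5<1$, which establishes the proposition.

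I do not expect a genuine obstacle here: the statement is an existence claim, so the work is the construction plus a short verification. The only point that needs a little care is keeping the three (in)equalities simultaneously satisfied, and the computation above exhibits an explicit non-empty parameter window $\sin 2\theta<\sqrt{2}-1$, $1/\sqrt{2}<v\leq 1/(1+\sin 2\theta)$. A slightly less computational variant, if preferred, is to note that already at $v=1/\sqrt{2}$ one has $\mathcal{B}_{max}^2=(1+\sin 2\theta)/\sqrt{2}<1$ for $\theta$ small, and then invoke continuity to push $v$ a hair above $1/\sqrt{2}$, thereby gaining $\mathcal{S}^{BC}_{max}>1$ while retaining $\mathcal{B}_{max}<1$.
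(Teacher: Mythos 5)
Your proposal is correct and takes essentially the same route as the paper: both prove the proposition by exhibiting an explicit pair of two-qubit states and checking the three inequalities through \theoref{theo:B_maximization_separable} and the Horodecki criterion $\mathcal{S}_{max}=\sqrt{t_1+t_2}$. The only difference is the choice of witness states (the paper uses a rank-two Bell-diagonal mixture for $\rho_{AB}$ and a coloured-noise state for $\rho_{BC}$, you use a pure non-maximally entangled state and a Werner state), and your computations check out, with the added merit of displaying an entire parameter window $\sin 2\theta<\sqrt{2}-1$, $1/\sqrt{2}<v\leq 1/(1+\sin 2\theta)$ rather than a single numerical instance.
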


\begin{proof}
We will prove this point with an example. Let us take

\eq{
\arr{
\rho_{AB}=\dfrac{3}{5} \ket{\psi^+}\bra{\psi^+}+\dfrac{2}{5} \ket{\phi^+}\bra{\phi^+}=\left( \begin{array}{cccc} 0.2& 0 & 0 & 0.2 \\ 0 & 0.3 & 0.3 & 0 \\ 0 & 0.3 & 0.3 & 0 \\0.2& 0 & 0 & 0.2 \end{array}\right),\\\\
\rho_{BC}= \rho(v=\dfrac{7}{10},\;\lambda =\dfrac{1}{3})=\left( \begin{array}{cccc} 0.5& 0 & 0 & 0 \\ 0 & 0.45 & -0.35 & 0 \\ 0 & -0.35 & 0.45 & 0 \\0& 0 & 0 & 0.5 \end{array}\right),
}
}
where we defined $\rho(v,\lambda)$ as
\eq{
\label{eq:colored_white_noise_states}
\begin{array}{l}
\rho(v,\lambda) = \;v\; \ket{\psi^-}\bra{\psi^-}\;+\;(1-v)[\lambda\dfrac{\ket{\psi^-}\bra{\psi^-}+\ket{\psi^+}\bra{\psi^+}}{2}+(1-\lambda)\dfrac{\mathbb{I}}{4}].
\end{array}}
For these two quantum states one can check that
\eq{t_1^A=1,\;\;\;\;t_2^A=0.04,\;\;\;\;t_1^C=0.64,\;\;\;\;t_2^C=0.49,}
which leads to
\eq{\mathcal{S}^{AB}_{max}\simeq 1.02,\; \;\mathcal{S}^{BC}_{max}\simeq 1.06,\; \;\mathcal{B}_{max}\simeq 0.97.}
\end{proof}

This shows how it is possible to have non-local quantum states which nonetheless cannot violate the bilocality inequality (with separable measurements).

All these statements provide a well-defined picture of the relation between the CHSH inequality and the bilocality inequality in respect to the quantum states $\rho_{AB}\otimes\rho_{BC}$. We indeed derived all the possible cases of quantum non-local correlations which may be seen between couples of nodes, or in the whole network (according to the CHSH and bilocality inequalities). This characterization is shown in \figref{fig_Exp2_New_Loc-Biloc}, in terms of a Venn diagram.

We finally notice that if A and B share a maximally entangled state while B and C share a generic quantum state, then it is easier to obtain a bilocality violation in the tripartite network rather than a CHSH violation between the nodes $B^C$ and $C$. Indeed it is possible to derive
\eq{\begin{array}{l} \mathcal{B}_{max}(\ket{\Phi^{+}}\bra{\Phi^{+}}\otimes\rho_{BC})=\sqrt{\sqrt{t^C_1}+\sqrt{t^C_2}}\geq \sqrt{t^C_1+t^C_2}=\mathcal{S}^{BC}_{max}\end{array},}
where we made use of the following Lemma
\begin{lemma}
\label{lemma:t_less_1}
Given the parameters $t^A_1,\;t^A_2,\;t^C_1$ and $t^C_2$ defined in \eqref{eq:max_bilocality}, it holds
\eq{0\leq t^A_1,\;t^A_2,\;t^C_1,\;t^C_2 \leq 1}
\end{lemma}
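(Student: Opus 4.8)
The plan is to identify the four $t$-parameters with (squared) singular values of the correlation matrices $T_{\rho_{AB}}$ and $T_{\rho_{BC}}$, and then to bound the operator norm of such a matrix by reinterpreting $\vec{a}\cdot T_\rho\vec{b}$ as the expectation value of a Pauli-type observable. The lower bound is immediate: each of $t^A_1,t^A_2$ (resp. $t^C_1,t^C_2$) is, by construction, an eigenvalue of a matrix of the form $M^{\mathbf{T}}M$ with $M=T_{\rho_{AB}}$ (resp. $M=T_{\rho_{BC}}$), and since $\vec{v}\cdot M^{\mathbf{T}}M\vec{v}=\|M\vec{v}\|^2\ge 0$ such a matrix is positive semidefinite, so all its eigenvalues are non-negative.

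For the upper bound I would first recall the variational characterization already used (together with Lemma \ref{lemma:MMT_MTM}) in the proof of \theoref{theo:B_maximization_separable}: for the real $3\times 3$ matrix $T_\rho$ the largest eigenvalue of $T_\rho^{\mathbf{T}}T_\rho$ equals $\max_{\|\vec{v}\|=1}\|T_\rho\vec{v}\|^2=\max_{\|\vec{a}\|=\|\vec{v}\|=1}(\vec{a}\cdot T_\rho\vec{v})^2$. Since $t^A_1$ is by definition the \emph{largest} eigenvalue of $T_{\rho_{AB}}^{\mathbf{T}}T_{\rho_{AB}}$ and $t^A_2\le t^A_1$ (and likewise for the $C$-quantities), it suffices to show that $|\vec{a}\cdot T_\rho\vec{b}|\le 1$ for every two-qubit density matrix $\rho$ and all unit vectors $\vec{a},\vec{b}\in\mathbb{R}^3$. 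Here I invoke \eqref{eq:operators_and_vectors_proof} with $O_1=\vec{a}\cdot\vec{\sigma}$ and $O_2=\vec{b}\cdot\vec{\sigma}$, which gives $\vec{a}\cdot T_\rho\vec{b}=\Tr[(\vec{a}\cdot\vec{\sigma}\otimes\vec{b}\cdot\vec{\sigma})\rho]=\mean{(\vec{a}\cdot\vec{\sigma})\otimes(\vec{b}\cdot\vec{\sigma})}_{\rho}$. For unit $\vec{a},\vec{b}$ each factor $\vec{a}\cdot\vec{\sigma}$ and $\vec{b}\cdot\vec{\sigma}$ is Hermitian with spectrum $\{+1,-1\}$, hence so is their tensor product; its expectation value in any state therefore lies in $[-1,1]$. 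This yields $|\vec{a}\cdot T_\rho\vec{b}|\le 1$, hence $t^A_1,t^C_1\le 1$, and together with the monotonicity $t^A_2\le t^A_1$, $t^C_2\le t^C_1$ all four quantities lie in $[0,1]$.

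There is no genuinely hard step in this argument; the proof is essentially a bookkeeping exercise combining \eqref{eq:operators_and_vectors_proof} with the elementary fact that $(\vec{a}\cdot\vec{\sigma})\otimes(\vec{b}\cdot\vec{\sigma})$ has operator norm $1$. The only points that call for mild care are stating the singular-value/operator-norm identity cleanly (it is the same fact that underlies Lemma \ref{lemma:MMT_MTM}) and noticing that $t^A_2$ and $t^C_2$ are controlled by monotonicity, so they do not need a separate treatment.
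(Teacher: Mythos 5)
Your proof is correct, but it follows a different route from the paper's. The paper proceeds in two steps: first it invokes Makhlin's transformation law $T_{\rho}\to U_1 T_{\rho}U_2^{\mathbf{T}}$ under local unitaries together with the singular value decomposition to argue that one may assume $T_{\rho}$ diagonal without loss of generality; then, for diagonal $T_{\rho}$, it uses the entry-wise bound $|t_{nm}|\leq 1$ from \cite{Horodecki1995} to conclude that the eigenvalues of $T_{\rho}^{\mathbf{T}}T_{\rho}=T_{\rho}^2$ are $t_{ii}^2\leq 1$. You bypass the diagonalization entirely: you characterize the largest eigenvalue of $T_{\rho}^{\mathbf{T}}T_{\rho}$ variationally as $\max_{\|\vec{a}\|=\|\vec{b}\|=1}(\vec{a}\cdot T_{\rho}\vec{b})^2$ and bound the bilinear form directly via \eqref{eq:operators_and_vectors_proof}, since $\vec{a}\cdot T_{\rho}\vec{b}$ is the expectation value of the $\pm 1$-valued observable $(\vec{a}\cdot\vec{\sigma})\otimes(\vec{b}\cdot\vec{\sigma})$. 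The two arguments rest on the same physical fact (the entry-wise bound $|t_{nm}|\leq 1$ used by the paper is just your expectation-value bound specialized to coordinate directions), but yours is more self-contained: it needs neither the claim that local unitaries induce the orthogonal rotations required for the SVD step nor the reduction to a diagonal $T_{\rho}$, and it handles the lower bound $t\geq 0$ explicitly via positive semidefiniteness of $M^{\mathbf{T}}M$, which the paper leaves implicit. Both proofs are sound and of comparable length.
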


\begin{proof}
This proof will be divided in two main points.\\\\
\textbf{1) } $\forall \rho,\;\;\exists \;\rho'=U^\dagger \rho U\;\text{such that}\;T_{\rho'}\;is\;diagonal$.\\
\textit{As discussed in \cite{Makhlin2002}, if we apply a local unitary $U=U_1\otimes U_2$ to the initial quantum state $\rho$, the matrix $T_{\rho}$ will transform accordingly to
\eq{T_{\rho}\longrightarrow U_1 T_{\rho} U_2^{\mathbf{T}}.}
According to the Singular Decomposition Theorem, it is always possible to choose $U_1$ and $U_2$ such that $U_1 T_{\rho} U_2^{\mathbf{T}}$ is diagonal, thus demonstrating point 1.}
\\\\
It is important to stress that we can always rotate our Hilbert space in a way that $\rho \rightarrow U^\dagger \rho U$ so we can take $\rho'$ without loss of generality.\\\\
\textbf{2) } \textit{If $T_{\rho}$ is diagonal, then the eigenvalues of $T_{\rho}^{\mathbf{T}}T_{\rho}$ are less or equal to 1.\\
It was shown in \cite{Horodecki1995} that, for every quantum state $\rho$, we have $|t_{nm}| \leq 1, \;t_{nm}\in \mathcal{R}$ regardless to the basis chosen for our Hilbert space. If $T_{\rho}$ is diagonal then $T_{\rho}^{\mathbf{T}}T_{\rho}=T_{\rho}^2$ and its eigenvalues $t_i$ can be written as $t_i= t_{ii}^2 \leq 1$.}\\\\
Given the definitions of $t^A_1$ and $t^A_2$ ($t^C_1$ and $t^C_2$) described in \eqref{eq:max_bilocality}, the lemma is proved.  
\end{proof}

\subsection{Extension to the star network scenario}
We now generalize the results of \theoref{theo:B_maximization_separable}, to the case of a \textit{n-partite} star network. This network is the natural extension of the bilocality scenario, and it is composed of $n$ sources sharing a quantum state between one of the $n$ stations $A_i$ and a central node B (see \figref{fig_DAGS}-d). The bilocality scenario corresponds to the particular case where $n=2$. The classical description of correlations in this scenario is characterized by the probability decomposition
\eq{
\label{eq:star_network_decomposition}
p(\{a_i\}_{i=1,n},b|\{x_i\}_{i=1,n}, y) = \displaystyle \int \bigg( \displaystyle \prod_{i=1}^n d\lambda_i p(\lambda_i) p(a_i|x_i,\lambda_i) \bigg) p(b|y,\{\lambda_i\}_{i=1,n}).}

As shown in \cite{Tavakoli2014}, assuming binary inputs and outputs in all the stations, the following n-locality inequality holds
\eq{\mathcal{N}_{star}=|I|^{1/n}+|J|^{1/n}\leq 1,}
where
\eq{\arr{
I=\dfrac{1}{2^n}\displaystyle \sum_{x_1...x_n}\mean{A^1_{x_1}...A^n_{x_n}B_0},\;\;\;\;\;
I=\dfrac{1}{2^n}\displaystyle \sum_{x_1...x_n}(-1)^{\sum_i x_i}\mean{A^1_{x_1}...\;A^n_{x_n}B_1},\\\\
\mean{A^1_{x_1}...A^n_{x_n}B_y}=\displaystyle \sum_{a_1...a_n,b} (-1)^{b+\sum_i a_i} p(\{a_i\}_{i=1,n}, b|\{x_i\}_{i=1,n}, y).}}
We will now derive a theorem showing the maximal value of parameter $\mathcal{N}_{star}$ that can be obtained by separable measurements on the central node and given arbitrary bipartite states shared between the central node and the $n$ parties.

\begin{theorem}[Optimal violation of the n-locality inequality]
\label{theo:star_netw_maximization}
Given single qubit projective measurements and defined the generic quantum state $\rho_{A_1B} \otimes...\otimes \rho_{A_{n}B}$ accordingly to \eqref{eq:general_2qubit_state}, the maximal value of $\mathcal{N}_{star}$ is given by
\eq{\arr{\label{eq:max_star_network}
\mathcal{N}_{star}^{max}=
\displaystyle \sqrt{(\displaystyle \prod_{i=1}^n t^{A_i}_1)^{1/n} + (\displaystyle \prod_{i=1}^n t^{A_i}_2)^{1/n}},
}}
where $t^{A^i}_1$ and $t^{A^i}_2$ are the two greater (and positive) eigenvalues of the matrix $T_{\rho_{A_i B}}^\mathbf{T} T_{\rho_{A_i B}}$ with $t^{A_i}_1 \geq t^{A_i}_2$.
\end{theorem}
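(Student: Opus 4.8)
The plan is to repeat, with $n$ outer parties $A_1,\dots,A_n$ in place of the two parties $A$ and $C$, the three-stage strategy of the proof of \theoref{theo:B_maximization_separable}. For the first stage I would treat the central node $B$ as $n$ substations $B^{A_1},\dots,B^{A_n}$ (see \figref{fig_DAGS}-d), each performing a single-qubit projective measurement $\vec{b}^{A_i}_y\cdot\vec{\sigma}$ on its half of $\rho_{A_iB}$, while $A_i$ measures $\vec{a}^i_{x_i}\cdot\vec{\sigma}$. Then $\langle A^1_{x_1}\cdots A^n_{x_n}B_y\rangle=\prod_{i=1}^n\langle A^i_{x_i}\otimes B^{A_i}_y\rangle_{\rho_{A_iB}}$, and \eqref{eq:operators_and_vectors_proof} turns each factor into $\vec{a}^i_{x_i}\cdot T_{\rho_{A_iB}}\vec{b}^{A_i}_y$. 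Because the sums defining $I$ and $J$ run over the Cartesian product of the inputs, they factorize, and one obtains the $n$-party analogue of Lemma~\ref{theo:B_separable_measurements},
\eq{
I=\prod_{i=1}^n\left(\frac{\vec{a}^i_0+\vec{a}^i_1}{2}\cdot T_{\rho_{A_iB}}\vec{b}^{A_i}_0\right),\qquad
J=\prod_{i=1}^n\left(\frac{\vec{a}^i_0-\vec{a}^i_1}{2}\cdot T_{\rho_{A_iB}}\vec{b}^{A_i}_1\right),
}
so that $\mathcal{N}_{star}=\prod_i\left|\frac{\vec{a}^i_0+\vec{a}^i_1}{2}\cdot T_{\rho_{A_iB}}\vec{b}^{A_i}_0\right|^{1/n}+\prod_i\left|\frac{\vec{a}^i_0-\vec{a}^i_1}{2}\cdot T_{\rho_{A_iB}}\vec{b}^{A_i}_1\right|^{1/n}$.

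For the second stage I would introduce, exactly as in \eqref{eq:orthogonal_vectors_changes}, the orthonormal pairs $\vec{a}^i_0+\vec{a}^i_1=2\cos\alpha_i\,\vec{n}_{A_i}$ and $\vec{a}^i_0-\vec{a}^i_1=2\sin\alpha_i\,\vec{n}'_{A_i}$ with $\vec{n}_{A_i}\perp\vec{n}'_{A_i}$, and eliminate each central-node direction by Cauchy-Schwarz: the optimal choices $\vec{b}^{A_i}_0\propto T^{\mathbf{T}}_{\rho_{A_iB}}\vec{n}_{A_i}$ and $\vec{b}^{A_i}_1\propto T^{\mathbf{T}}_{\rho_{A_iB}}\vec{n}'_{A_i}$ replace the scalar products by the norms $\|T^{\mathbf{T}}_{\rho_{A_iB}}\vec{n}_{A_i}\|$ and $\|T^{\mathbf{T}}_{\rho_{A_iB}}\vec{n}'_{A_i}\|$. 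Setting the $n$ partial derivatives $\partial\mathcal{N}_{star}/\partial\alpha_j$ to zero forces, as in \eqref{eq:colored_white_imperfect_e_partial_derivatives_2values_sol}, $\tan^2\alpha_j$ to be the same for all $j$; substituting a common value and using $\max_\alpha(x|\cos\alpha|+y|\sin\alpha|)=\sqrt{x^2+y^2}$ reduces the task to
\eq{
\mathcal{N}_{star}^{max}=\max\sqrt{\left(\prod_{i=1}^n\|T^{\mathbf{T}}_{\rho_{A_iB}}\vec{n}_{A_i}\|\right)^{2/n}+\left(\prod_{i=1}^n\|T^{\mathbf{T}}_{\rho_{A_iB}}\vec{n}'_{A_i}\|\right)^{2/n}},
}
the maximization now being over the $n$ orthonormal pairs $(\vec{n}_{A_i},\vec{n}'_{A_i})$.

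The third stage carries the weight of the theorem and is where I expect the main obstacle. Writing $\|T^{\mathbf{T}}_{\rho_{A_iB}}\vec{v}\|^2=\vec{v}\cdot T_{\rho_{A_iB}}T^{\mathbf{T}}_{\rho_{A_iB}}\vec{v}$ and using Lemma~\ref{lemma:MMT_MTM} to identify the two largest eigenvalues of $T_{\rho_{A_iB}}T^{\mathbf{T}}_{\rho_{A_iB}}$ with $t^{A_i}_1\geq t^{A_i}_2$, I would expand each $\vec{n}_{A_i},\vec{n}'_{A_i}$ in the eigenbasis of that matrix and set up a Lagrange-multiplier problem of the shape of \eqref{eq:lagrange_problem_bilo}, with unit-norm and mutual-orthogonality constraints for every $i$; the expected optimum places $\vec{n}_{A_i}$ along the leading eigenvector and $\vec{n}'_{A_i}$ along the second, giving $\prod_i\|T^{\mathbf{T}}_{\rho_{A_iB}}\vec{n}_{A_i}\|=\prod_i\sqrt{t^{A_i}_1}$ and $\prod_i\|T^{\mathbf{T}}_{\rho_{A_iB}}\vec{n}'_{A_i}\|=\prod_i\sqrt{t^{A_i}_2}$, which reproduces \eqref{eq:max_star_network}. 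The difficulty, absent at $n=2$, is that the $n$ pairs are no longer decoupled: they are tied together by the two $n$-fold products, so one cannot maximize each pair on its own, and the careful point is to rule out ``balanced'' configurations that spread the weight of some $\vec{n}_{A_i}$ over the two leading eigenvectors, and to verify that the geometric-mean structure of the two products genuinely favours the ordered assignment. I would attack this either through the stationarity conditions or, more robustly, by translating it into an inequality for the achievable pairs $p_i=\|T^{\mathbf{T}}_{\rho_{A_iB}}\vec{n}_{A_i}\|^2$, $q_i=\|T^{\mathbf{T}}_{\rho_{A_iB}}\vec{n}'_{A_i}\|^2$, which obey $p_i+q_i\leq t^{A_i}_1+t^{A_i}_2$.
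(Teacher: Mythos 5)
Your stages one and two coincide with the paper's own proof: the factorization of $I$ and $J$ over the $n$ arms, the elimination of the central-node directions $\vec{b}^{\,i}_y$ by Cauchy--Schwarz, and the collapse of the $\alpha_i$-optimization to a common angle via $\tan^2\alpha_j=\tan^2\alpha_k$ are exactly the steps taken there. The divergence is in stage three, which you leave open. The paper closes it by a coordinate-wise iteration: it freezes $k_1=\prod_{i\ge2}\|T^{\mathbf T}_{\rho_{A_iB}}\vec n_i\|$ and $k_2=\prod_{i\ge2}\|T^{\mathbf T}_{\rho_{A_iB}}\vec n'_i\|$, asserts that the resulting two-vector Lagrange problem (the analogue of \eqref{eq:lagrange_problem_bilo}) is solved by vectors ``with two null components out of three'' (i.e.\ aligned with the top two eigenvectors, ordered according to whether $k_1>k_2$), and concludes ``by applying iteratively this procedure.'' In other words, the paper's answer to the obstacle you flag is precisely the unproved assertion you say needs justification; no argument excluding the balanced configurations is supplied, and coordinate-wise optimization of a coupled objective in any case only certifies stationarity, not a global maximum.

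Your worry is well founded, and neither of your proposed closures can succeed. (i) The relaxation to $p_i+q_i\le t^{A_i}_1+t^{A_i}_2$ is too weak: by Mahler's inequality $(\prod_ip_i)^{1/n}+(\prod_iq_i)^{1/n}\le\big(\prod_i(p_i+q_i)\big)^{1/n}\le\big(\prod_i(t^{A_i}_1+t^{A_i}_2)\big)^{1/n}$, and this last quantity is (again by Mahler) in general strictly \emph{larger} than the target $(\prod_it^{A_i}_1)^{1/n}+(\prod_it^{A_i}_2)^{1/n}$, so the sum constraint alone cannot yield \eqref{eq:max_star_network}. (ii) More seriously, the balanced configurations can strictly beat the axis-aligned one. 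Take $n=2$, $T_{\rho_{A_1B}}=\mathrm{diag}(1,0,0)$ (the state $\tfrac12\proj{\phi^+}+\tfrac12\proj{\psi^+}$) and $T_{\rho_{A_2B}}=\tfrac{1}{\sqrt2}\,\mathrm{diag}(1,1,-1)$ (a valid Bell-diagonal state). Then $t^{A_1}_1=1$, $t^{A_1}_2=0$, $t^{A_2}_1=t^{A_2}_2=\tfrac12$, and \eqref{eq:max_star_network} gives $2^{-1/4}\approx0.84$; but $\vec n_1=(1,1,0)/\sqrt2$, $\vec n'_1=(1,-1,0)/\sqrt2$ gives $p_1=q_1=p_2=q_2=\tfrac12$, hence $\sqrt{p_1p_2}+\sqrt{q_1q_2}=1$ and $\mathcal N_{star}=1$ (explicitly, $A_0=\sigma_x$, $A_1=\sigma_y$, $B_y=\sigma_x\otimes\tfrac{\sigma_x+(1-2y)\sigma_y}{\sqrt2}$, $C_0=\sigma_x$, $C_1=\sigma_y$ yield $I=J=\tfrac14$). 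Note also that the coupling between arms through the products, and hence this phenomenon, is already present at $n=2$, contrary to your remark. So the step you identify as the main obstacle is a genuine one: the ordered, axis-aligned assignment is not the optimum for all admissible states, and any correct completion must either restrict the class of states (e.g.\ to those for which the axis-aligned stationary point dominates) or replace this step of the argument entirely.
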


\begin{proof}
In our single qubit measurements scheme the operator $B$ can be written as
\eq{B_y=\displaystyle \bigotimes_{i=1}^{n}B^i_y=\displaystyle \bigotimes_{i=1}^{n} \vec{b}^i_y\cdot \vec{\sigma}.}
As pointed out in \cite{Tavakoli2014}, this allows us to write
\eq{\arr{
\mathcal{N}_{star}=|\displaystyle \prod_{i=1}^n \dfrac{1}{2}\big(\mean{A^i_0B^i_0}+\mean{A^i_1B^i_0}\big)|^{1/n}+|\displaystyle \prod_{i=1}^n \dfrac{1}{2} \big(\mean{A^i_0B^i_1}-\mean{A^i_1B^i_1}\big)|^{1/n},
}}
which leads to
\eq{\arr{
\mathcal{N}_{star}=|\displaystyle \prod_{i=1}^n \dfrac{1}{2}(\vec{a}^i_0+\vec{a}^i_1)\cdot T_{\rho_{A_iB}}\vec{b}^i_0|^{1/n}+|\displaystyle \prod_{i=1}^n \dfrac{1}{2} (\vec{a}^i_0-\vec{a}^i_1)\cdot T_{\rho_{A_iB}}\vec{b}^i_1|^{1/n}.
}}
Introducing the pairs of mutually orthogonal vectors
\eq{
\label{eq:orthogonal_vectors_changes_star}\arr{
(\vec{a}^i_0+\vec{a}^i_1)=2\; \cos \alpha_i\; \vec{n}_i \;\;\;\&\;\;\; (\vec{a}_0-\vec{a}_1)=2\; \sin \alpha_i\; \vec{n}'_i,
}}
allows us to write
\eq{\arr{
\mathcal{N}_{star}=|\displaystyle \prod_{i=1}^n \cos \alpha_i\; \vec{n}_i\cdot T_{\rho_{A_iB}}\vec{b}^i_0|^{1/n}+|\displaystyle \prod_{i=1}^n \sin \alpha_i\; \vec{n}'_i\cdot T_{\rho_{A_iB}}\vec{b}^i_1|^{1/n}.
}}
We can choose the parameters $\vec{b}^i_y$ so that they maximize the scalar products. We obtain
\eq{\arr{
\mathcal{N}^{max}_{star}=\max\bigg(|\displaystyle \prod_{i=1}^n \cos \alpha_i\; || T^{\mathbf{T}}_{\rho_{A_iB}}\vec{n}_i ||\;|^{1/n}+|\displaystyle \prod_{i=1}^n \sin \alpha_i\; || T^{\mathbf{T}}_{\rho_{A_iB}}\vec{n}'_i ||\;|^{1/n}\bigg).
}}
We can now proceed to the maximization over the parameters $\alpha_i$. Let us define the function

\eq{K(\alpha_1,...\alpha_n)= |\lambda_1 \displaystyle \prod_{i=1}^n \cos \alpha_i |^{1/n}+|\lambda_2\displaystyle \prod_{i=1}^n \sin \alpha_i |^{1/n}.
}
We can write
\eq{\dfrac{\partial K(\alpha_1,...\alpha_n)}{\partial \alpha_j}=
\dfrac{ |\lambda_2  \prod_{i=1}^n \sin \alpha_i|^{1/n}}{n} cot\alpha_j - \dfrac{|\lambda_1 \prod_{i=1}^n \cos \alpha_i|^{1/n}}{n} tan\alpha_j =0,
}
which, similarly to \eqref{eq:colored_white_imperfect_e_partial_derivatives_2values}, admits only solutions constrained by
\eq{
\tan(\alpha_j)^2=\tan(\alpha_k)^2\;\;\leftrightarrow\;\;\alpha_j = \pm \alpha_k + n\pi \;,\;n\in\mathbb{Z}\;\;\;\forall j,k.
}
This leads to
\eq{K(\alpha_1,...\alpha_n)_{max}=\max_{\alpha} \big( |\lambda_1^{1/n} \cos \alpha| +|\lambda_2^{1/n} \sin \alpha |\big)=\sqrt{\lambda_1^{2/n}+\lambda_2^{2/n}},}
which allows us to write
\eq{\arr{
\mathcal{N}^{max}_{star}=\max
\sqrt{| \displaystyle \prod_{i=1}^n || T^{\mathbf{T}}_{\rho_{A_iB}}\vec{n}_i ||\; |^{2/n}+| \displaystyle \prod_{i=1}^n || T^{\mathbf{T}}_{\rho_{A_iB}}\vec{n}'_i ||\; |^{2/n}}.
}}
Let us now define
\eq{k_1=| \displaystyle \prod_{i=2}^n || T^{\mathbf{T}}_{\rho_{A_iB}}\vec{n}_i ||\; |,\;\;\;\;\;\;
k_2=| \displaystyle \prod_{i=2}^n || T^{\mathbf{T}}_{\rho_{A_iB}}\vec{n}'_i ||\; |,}
we have that
\eq{\arr{
\mathcal{N}^{max}_{star}=\max
\sqrt{k_1^{2/n} || T^{\mathbf{T}}_{\rho_{A_iB}}\vec{n}_1 ||^{2/n}+k_2^{2/n} || T^{\mathbf{T}}_{\rho_{A_iB}}\vec{n}'_1 ||^{2/n}}.
}}

Labeling $\lambda_1,\;\lambda_2$ and $\lambda_3$ as the eigenvalues of $T_{\rho_{A_1B}}T^{\mathbf{T}}_{\rho_{A_1B}}$ (which is real and symmetric) and writing $\vec{n}_1$ and $\vec{n}'_1$ in an eigenvector basis we obtain the Lagrange multipliers problem related to the maximization of a function $f$, given the constraints $g_i$:
\eq{\arr{
f(\vec{n}_1,\;\vec{n}'_1)=\;\sqrt{\big(k_1^2 \displaystyle \sum_{i=1,2,3} \lambda_i (n^i_1)^2 \big)^{2/n}}\; + \; \sqrt{\big(k_2^2 \displaystyle \sum_{i=1,2,3} \lambda_i (n'^i_1)^2\big)^{2/n}},\\\\
g_1(\vec{n}_1)=\vec{n}_1\cdot\vec{n}_1-1,\;\;\;\;\;
g_2(\vec{n}'_1)=\vec{n}'_1\cdot\vec{n}'_1-1,\;\;\;\;\;
g_3(\vec{n}_1,\;\vec{n}'_1)=\vec{n}_1\cdot\vec{n}'_1,}}
where we considered that the values which maximize $|f(x)|$ also maximize $\sqrt{|f(x)|}$.

This Lagrangian multipliers problem can be treated similarly to \eqref{eq:lagrange_problem_bilo}, giving the same results. If $k_1>k_2$, we obtain
\eq{
f_{max}=\big(k_1 \sqrt{\lambda_1}\big)^{2/n} \; + \; \big(k_2\; \sqrt{\lambda_2}\big)^{2/n}
}
which leads to
\eq{
\mathcal{N}_{star}^{max}=\max \bigg(
\sqrt{(t^{A_1}_1)^{1/n}\;(\displaystyle \prod_{i=2}^n || T^{\mathbf{T}}_{\rho_{A_iB}}\vec{n}_i ||)^{2/n} \;+ \;(t^{A_1}_2)^{1/n}\;(  \displaystyle \prod_{i=2}^n || T^{\mathbf{T}}_{\rho_{A_iB}}\vec{n}'_i ||)^{2/n}}\bigg).
}
The proof is concluded by applying iteratively this procedure.
\end{proof}

We notice that the bilocality scenario can be seen as a particular case ($n=2$) of a star network, where $A_2\equiv C$ and $x_2\equiv z$. Moreover we emphasize that \eqref{eq:max_star_network} gives the same results that would be obtained if one performed an optimized CHSH test on a 2-qubit state were $t_1$ and $t_2$ are given by the geometric means of the parameters $t^{A_i}_1$ and $t^{A_i}_2$.

\section{Conclusions}
Generalizations of Bell's theorem to complex networks offer a new theoretical and experimental ground for further understanding quantum correlations and its practical applications in information processing. Similarly to usual Bell scenarios, understanding the set of quantum correlations we can achieve and in particular what are the optimal quantum violation of Bell inequalities is of primal importance.

In this work we have taken a step forward in this direction, deriving the optimal violation  of the bilocality inequality proposed in \cite{Branciard2010,Branciard2012} and generalized in \cite{Tavakoli2014} for the case of a star-shaped network with $n$ independent sources. Considering that the central node in the network performs arbitrary projective separable measurements and that the other parties perform projective measurements we have obtained the optimal value for the violation of the bilocality and n-locality inequalities. Our results can be understood as the generalization for complex networks of the Horodecki's criterion \cite{Horodecki1995} valid for the CHSH inequality \cite{Clauser1969}. We have analyzed in details the relation between the bilocality inequality and in particular showed that if both the quantum states cannot violate the CHSH inequality then the bilocality inequality also cannot be violated, thus precluding, in this sense, its use as a way to detect quantum correlations beyond the CHSH case. Moreover, we have showed that some quantum states can separately exhibit Bell non-local correlations, but nevertheless cannot violate the bilocality inequality when considered as a whole in the network, thus proving that not all non-local states can be used to witness non-bilocal correlations (at least according to this specific inequality).

However, all these conclusions are based on the assumption that the central node in the network performs separable measurements (that in such scenario include measurements in the Bell basis as a particular case). This immediately opens a series of interesting questions for future research. Can we achieve better violations by employing more general measurements in the central station, for instance, entangled measurements in different basis, non-maximally entangled or non-projective? Related to that, it would be highly relevant to derive new classes of network inequalities \cite{Chaves2016PRL,Rosset2016,Wolfe2016inflation}. One of the goals of generalizing Bell's theorem for complex networks is exactly the idea that since the corresponding classical models are more restrictive, it is reasonable to expect that we can find new Bell inequalities allowing us to probe the non-classical character of correlations that are local according to usual LHV models.
Can it be that separable measurements or measurement in the Bell basis allow us to detect such kind of correlations if new bilocality or n-locality inequalities are considered? And what would happen if we considered general POVM measurements in all our stations? Could we witness a whole new regime of quantum states, which at the moment, instead, admit a n-local classical description? Finally, one can wonder whether quantum states of higher dimensions (qudits) would allow for higher violations of the n-locality inequalities.\\

\emph{Note added:} During the preparation of this manuscript which contains results of a master thesis \cite{Andreoli}, we became aware of an independent work \cite{gisin2017all} preprinted in February 2017.

\begin{acknowledgments}
This work was supported by the ERC-Starting Grant 3D-QUEST (3D-Quantum Integrated Optical Simulation; grant agreement no. 307783): http://www.3dquest.eu and Brazilian ministries MEC and MCTIC. GC is supported by Becas Chile and Conicyt.
\end{acknowledgments}


\begin{thebibliography}{41}%
\makeatletter
\providecommand \@ifxundefined [1]{%
 \@ifx{#1\undefined}
}%
\providecommand \@ifnum [1]{%
 \ifnum #1\expandafter \@firstoftwo
 \else \expandafter \@secondoftwo
 \fi
}%
\providecommand \@ifx [1]{%
 \ifx #1\expandafter \@firstoftwo
 \else \expandafter \@secondoftwo
 \fi
}%
\providecommand \natexlab [1]{#1}%
\providecommand \enquote  [1]{``#1''}%
\providecommand \bibnamefont  [1]{#1}%
\providecommand \bibfnamefont [1]{#1}%
\providecommand \citenamefont [1]{#1}%
\providecommand \href@noop [0]{\@secondoftwo}%
\providecommand \href [0]{\begingroup \@sanitize@url \@href}%
\providecommand \@href[1]{\@@startlink{#1}\@@href}%
\providecommand \@@href[1]{\endgroup#1\@@endlink}%
\providecommand \@sanitize@url [0]{\catcode `\\12\catcode `\$12\catcode
  `\&12\catcode `\#12\catcode `\^12\catcode `\_12\catcode `\%12\relax}%
\providecommand \@@startlink[1]{}%
\providecommand \@@endlink[0]{}%
\providecommand \url  [0]{\begingroup\@sanitize@url \@url }%
\providecommand \@url [1]{\endgroup\@href {#1}{\urlprefix }}%
\providecommand \urlprefix  [0]{URL }%
\providecommand \Eprint [0]{\href }%
\providecommand \doibase [0]{http://dx.doi.org/}%
\providecommand \selectlanguage [0]{\@gobble}%
\providecommand \bibinfo  [0]{\@secondoftwo}%
\providecommand \bibfield  [0]{\@secondoftwo}%
\providecommand \translation [1]{[#1]}%
\providecommand \BibitemOpen [0]{}%
\providecommand \bibitemStop [0]{}%
\providecommand \bibitemNoStop [0]{.\EOS\space}%
\providecommand \EOS [0]{\spacefactor3000\relax}%
\providecommand \BibitemShut  [1]{\csname bibitem#1\endcsname}%
\let\auto@bib@innerbib\@empty
\bibitem [{\citenamefont {Kleppner}\ and\ \citenamefont
  {Jackiw}(2000)}]{kleppner2000one}%
  \BibitemOpen
  \bibfield  {author} {\bibinfo {author} {\bibfnamefont {D.}~\bibnamefont
  {Kleppner}}\ and\ \bibinfo {author} {\bibfnamefont {R.}~\bibnamefont
  {Jackiw}},\ }\href@noop {} {\bibfield  {journal} {\bibinfo  {journal}
  {Science}\ }\textbf {\bibinfo {volume} {289}},\ \bibinfo {pages} {893}
  (\bibinfo {year} {2000})}\BibitemShut {NoStop}%
\bibitem [{\citenamefont {Horodecki}\ \emph {et~al.}(2009)\citenamefont
  {Horodecki}, \citenamefont {Horodecki}, \citenamefont {Horodecki},\ and\
  \citenamefont {Horodecki}}]{horodecki2009quantum}%
  \BibitemOpen
  \bibfield  {author} {\bibinfo {author} {\bibfnamefont {R.}~\bibnamefont
  {Horodecki}}, \bibinfo {author} {\bibfnamefont {P.}~\bibnamefont
  {Horodecki}}, \bibinfo {author} {\bibfnamefont {M.}~\bibnamefont
  {Horodecki}}, \ and\ \bibinfo {author} {\bibfnamefont {K.}~\bibnamefont
  {Horodecki}},\ }\href@noop {} {\bibfield  {journal} {\bibinfo  {journal}
  {Reviews of modern physics}\ }\textbf {\bibinfo {volume} {81}},\ \bibinfo
  {pages} {865} (\bibinfo {year} {2009})}\BibitemShut {NoStop}%
\bibitem [{\citenamefont {Nielsen}\ and\ \citenamefont
  {Chuang}(2002)}]{nielsen2002quantum}%
  \BibitemOpen
  \bibfield  {author} {\bibinfo {author} {\bibfnamefont {M.~A.}\ \bibnamefont
  {Nielsen}}\ and\ \bibinfo {author} {\bibfnamefont {I.}~\bibnamefont
  {Chuang}},\ }\href@noop {} {\enquote {\bibinfo {title} {Quantum computation
  and quantum information},}\ } (\bibinfo {year} {2002})\BibitemShut {NoStop}%
\bibitem [{\citenamefont {Bell}(1964)}]{Bell1964}%
  \BibitemOpen
  \bibfield  {author} {\bibinfo {author} {\bibfnamefont {J.~S.}\ \bibnamefont
  {Bell}},\ }\href@noop {} {\bibfield  {journal} {\bibinfo  {journal}
  {Physics}\ }\textbf {\bibinfo {volume} {1}},\ \bibinfo {pages} {195}
  (\bibinfo {year} {1964})}\BibitemShut {NoStop}%
\bibitem [{\citenamefont {Brunner}\ \emph {et~al.}(2014)\citenamefont
  {Brunner}, \citenamefont {Cavalcanti}, \citenamefont {Pironio}, \citenamefont
  {Scarani},\ and\ \citenamefont {Wehner}}]{Brunner2014}%
  \BibitemOpen
  \bibfield  {author} {\bibinfo {author} {\bibfnamefont {N.}~\bibnamefont
  {Brunner}}, \bibinfo {author} {\bibfnamefont {D.}~\bibnamefont {Cavalcanti}},
  \bibinfo {author} {\bibfnamefont {S.}~\bibnamefont {Pironio}}, \bibinfo
  {author} {\bibfnamefont {V.}~\bibnamefont {Scarani}}, \ and\ \bibinfo
  {author} {\bibfnamefont {S.}~\bibnamefont {Wehner}},\ }\href {\doibase
  10.1103/RevModPhys.86.419} {\bibfield  {journal} {\bibinfo  {journal} {Rev.
  Mod. Phys.}\ }\textbf {\bibinfo {volume} {86}},\ \bibinfo {pages} {419}
  (\bibinfo {year} {2014})}\BibitemShut {NoStop}%
\bibitem [{\citenamefont {Collins}\ and\ \citenamefont
  {Gisin}(2004)}]{Collins2004}%
  \BibitemOpen
  \bibfield  {author} {\bibinfo {author} {\bibfnamefont {D.}~\bibnamefont
  {Collins}}\ and\ \bibinfo {author} {\bibfnamefont {N.}~\bibnamefont
  {Gisin}},\ }\href {http://stacks.iop.org/0305-4470/37/i=5/a=021} {\bibfield
  {journal} {\bibinfo  {journal} {Journal of Physics A: Mathematical and
  General}\ }\textbf {\bibinfo {volume} {37}},\ \bibinfo {pages} {1775}
  (\bibinfo {year} {2004})}\BibitemShut {NoStop}%
\bibitem [{\citenamefont {Gallego}\ \emph {et~al.}(2014)\citenamefont
  {Gallego}, \citenamefont {W{\"u}rflinger}, \citenamefont {Chaves},
  \citenamefont {Ac{\'\i}n},\ and\ \citenamefont
  {Navascu{\'e}s}}]{Gallego2014}%
  \BibitemOpen
  \bibfield  {author} {\bibinfo {author} {\bibfnamefont {R.}~\bibnamefont
  {Gallego}}, \bibinfo {author} {\bibfnamefont {L.~E.}\ \bibnamefont
  {W{\"u}rflinger}}, \bibinfo {author} {\bibfnamefont {R.}~\bibnamefont
  {Chaves}}, \bibinfo {author} {\bibfnamefont {A.}~\bibnamefont {Ac{\'\i}n}}, \
  and\ \bibinfo {author} {\bibfnamefont {M.}~\bibnamefont {Navascu{\'e}s}},\
  }\href@noop {} {\bibfield  {journal} {\bibinfo  {journal} {New Journal of
  Physics}\ }\textbf {\bibinfo {volume} {16}},\ \bibinfo {pages} {033037}
  (\bibinfo {year} {2014})}\BibitemShut {NoStop}%
\bibitem [{\citenamefont {Collins}\ \emph {et~al.}(2002)\citenamefont
  {Collins}, \citenamefont {Gisin}, \citenamefont {Linden}, \citenamefont
  {Massar},\ and\ \citenamefont {Popescu}}]{Collins2002}%
  \BibitemOpen
  \bibfield  {author} {\bibinfo {author} {\bibfnamefont {D.}~\bibnamefont
  {Collins}}, \bibinfo {author} {\bibfnamefont {N.}~\bibnamefont {Gisin}},
  \bibinfo {author} {\bibfnamefont {N.}~\bibnamefont {Linden}}, \bibinfo
  {author} {\bibfnamefont {S.}~\bibnamefont {Massar}}, \ and\ \bibinfo {author}
  {\bibfnamefont {S.}~\bibnamefont {Popescu}},\ }\href {\doibase
  10.1103/PhysRevLett.88.040404} {\bibfield  {journal} {\bibinfo  {journal}
  {Phys. Rev. Lett.}\ }\textbf {\bibinfo {volume} {88}},\ \bibinfo {pages}
  {040404} (\bibinfo {year} {2002})}\BibitemShut {NoStop}%
\bibitem [{\citenamefont {Mermin}(1990)}]{Mermin1990}%
  \BibitemOpen
  \bibfield  {author} {\bibinfo {author} {\bibfnamefont {N.~D.}\ \bibnamefont
  {Mermin}},\ }\href {\doibase 10.1103/PhysRevLett.65.1838} {\bibfield
  {journal} {\bibinfo  {journal} {Phys. Rev. Lett.}\ }\textbf {\bibinfo
  {volume} {65}},\ \bibinfo {pages} {1838} (\bibinfo {year}
  {1990})}\BibitemShut {NoStop}%
\bibitem [{\citenamefont {Werner}\ and\ \citenamefont
  {Wolf}(2001)}]{Werner2001}%
  \BibitemOpen
  \bibfield  {author} {\bibinfo {author} {\bibfnamefont {R.~F.}\ \bibnamefont
  {Werner}}\ and\ \bibinfo {author} {\bibfnamefont {M.~M.}\ \bibnamefont
  {Wolf}},\ }\href {\doibase 10.1103/PhysRevA.64.032112} {\bibfield  {journal}
  {\bibinfo  {journal} {Phys. Rev. A}\ }\textbf {\bibinfo {volume} {64}},\
  \bibinfo {pages} {032112} (\bibinfo {year} {2001})}\BibitemShut {NoStop}%
\bibitem [{\citenamefont {Svetlichny}(1987)}]{Svetlichny1987}%
  \BibitemOpen
  \bibfield  {author} {\bibinfo {author} {\bibfnamefont {G.}~\bibnamefont
  {Svetlichny}},\ }\href {\doibase 10.1103/PhysRevD.35.3066} {\bibfield
  {journal} {\bibinfo  {journal} {Phys. Rev. D}\ }\textbf {\bibinfo {volume}
  {35}},\ \bibinfo {pages} {3066} (\bibinfo {year} {1987})}\BibitemShut
  {NoStop}%
\bibitem [{\citenamefont {Gallego}\ \emph {et~al.}(2012)\citenamefont
  {Gallego}, \citenamefont {Wurflinger}, \citenamefont {Acin},\ and\
  \citenamefont {Navascues}}]{Gallego2012}%
  \BibitemOpen
  \bibfield  {author} {\bibinfo {author} {\bibfnamefont {R.}~\bibnamefont
  {Gallego}}, \bibinfo {author} {\bibfnamefont {L.~E.}\ \bibnamefont
  {Wurflinger}}, \bibinfo {author} {\bibfnamefont {A.}~\bibnamefont {Acin}}, \
  and\ \bibinfo {author} {\bibfnamefont {M.}~\bibnamefont {Navascues}},\ }\href
  {\doibase 10.1103/PhysRevLett.109.070401} {\bibfield  {journal} {\bibinfo
  {journal} {Phys. Rev. Lett.}\ }\textbf {\bibinfo {volume} {109}},\ \bibinfo
  {pages} {070401} (\bibinfo {year} {2012})}\BibitemShut {NoStop}%
\bibitem [{\citenamefont {Bancal}\ \emph {et~al.}(2013)\citenamefont {Bancal},
  \citenamefont {Barrett}, \citenamefont {Gisin},\ and\ \citenamefont
  {Pironio}}]{Bancal2013}%
  \BibitemOpen
  \bibfield  {author} {\bibinfo {author} {\bibfnamefont {J.-D.}\ \bibnamefont
  {Bancal}}, \bibinfo {author} {\bibfnamefont {J.}~\bibnamefont {Barrett}},
  \bibinfo {author} {\bibfnamefont {N.}~\bibnamefont {Gisin}}, \ and\ \bibinfo
  {author} {\bibfnamefont {S.}~\bibnamefont {Pironio}},\ }\href {\doibase
  10.1103/PhysRevA.88.014102} {\bibfield  {journal} {\bibinfo  {journal} {Phys.
  Rev. A}\ }\textbf {\bibinfo {volume} {88}},\ \bibinfo {pages} {014102}
  (\bibinfo {year} {2013})}\BibitemShut {NoStop}%
\bibitem [{\citenamefont {Chaves}\ \emph {et~al.}(2016)\citenamefont {Chaves},
  \citenamefont {Cavalcanti},\ and\ \citenamefont {Aolita}}]{Chaves2016causal}%
  \BibitemOpen
  \bibfield  {author} {\bibinfo {author} {\bibfnamefont {R.}~\bibnamefont
  {Chaves}}, \bibinfo {author} {\bibfnamefont {D.}~\bibnamefont {Cavalcanti}},
  \ and\ \bibinfo {author} {\bibfnamefont {L.}~\bibnamefont {Aolita}},\
  }\href@noop {} {\bibfield  {journal} {\bibinfo  {journal} {arXiv preprint
  arXiv:1607.07666}\ } (\bibinfo {year} {2016})}\BibitemShut {NoStop}%
\bibitem [{\citenamefont {Kimble}(2008)}]{Kimble2008}%
  \BibitemOpen
  \bibfield  {author} {\bibinfo {author} {\bibfnamefont {H.~J.}\ \bibnamefont
  {Kimble}},\ }\href
  {http://www.nature.com/nature/journal/v453/n7198/full/nature07127.html}
  {\bibfield  {journal} {\bibinfo  {journal} {Nature}\ }\textbf {\bibinfo
  {volume} {453}},\ \bibinfo {pages} {1023} (\bibinfo {year}
  {2008})}\BibitemShut {NoStop}%
\bibitem [{\citenamefont {Zukowski}\ \emph {et~al.}(1993)\citenamefont
  {Zukowski}, \citenamefont {Zeilinger}, \citenamefont {Horne},\ and\
  \citenamefont {Ekert}}]{Zukowski1993}%
  \BibitemOpen
  \bibfield  {author} {\bibinfo {author} {\bibfnamefont {M.}~\bibnamefont
  {Zukowski}}, \bibinfo {author} {\bibfnamefont {A.}~\bibnamefont {Zeilinger}},
  \bibinfo {author} {\bibfnamefont {M.~A.}\ \bibnamefont {Horne}}, \ and\
  \bibinfo {author} {\bibfnamefont {A.~K.}\ \bibnamefont {Ekert}},\ }\href
  {\doibase 10.1103/PhysRevLett.71.4287} {\bibfield  {journal} {\bibinfo
  {journal} {Phys. Rev. Lett.}\ }\textbf {\bibinfo {volume} {71}},\ \bibinfo
  {pages} {4287} (\bibinfo {year} {1993})}\BibitemShut {NoStop}%
\bibitem [{\citenamefont {Branciard}\ \emph {et~al.}(2010)\citenamefont
  {Branciard}, \citenamefont {Gisin},\ and\ \citenamefont
  {Pironio}}]{Branciard2010}%
  \BibitemOpen
  \bibfield  {author} {\bibinfo {author} {\bibfnamefont {C.}~\bibnamefont
  {Branciard}}, \bibinfo {author} {\bibfnamefont {N.}~\bibnamefont {Gisin}}, \
  and\ \bibinfo {author} {\bibfnamefont {S.}~\bibnamefont {Pironio}},\ }\href
  {\doibase 10.1103/PhysRevLett.104.170401} {\bibfield  {journal} {\bibinfo
  {journal} {Phys. Rev. Lett.}\ }\textbf {\bibinfo {volume} {104}},\ \bibinfo
  {pages} {170401} (\bibinfo {year} {2010})},\ \Eprint
  {http://arxiv.org/abs/1112.4502} {1112.4502} \BibitemShut {NoStop}%
\bibitem [{\citenamefont {Branciard}\ \emph {et~al.}(2012)\citenamefont
  {Branciard}, \citenamefont {Rosset}, \citenamefont {Gisin},\ and\
  \citenamefont {Pironio}}]{Branciard2012}%
  \BibitemOpen
  \bibfield  {author} {\bibinfo {author} {\bibfnamefont {C.}~\bibnamefont
  {Branciard}}, \bibinfo {author} {\bibfnamefont {D.}~\bibnamefont {Rosset}},
  \bibinfo {author} {\bibfnamefont {N.}~\bibnamefont {Gisin}}, \ and\ \bibinfo
  {author} {\bibfnamefont {S.}~\bibnamefont {Pironio}},\ }\href {\doibase
  10.1103/PhysRevA.85.032119} {\bibfield  {journal} {\bibinfo  {journal} {Phys.
  Rev. A}\ }\textbf {\bibinfo {volume} {85}},\ \bibinfo {pages} {032119}
  (\bibinfo {year} {2012})}\BibitemShut {NoStop}%
\bibitem [{\citenamefont {Tavakoli}\ \emph {et~al.}(2014)\citenamefont
  {Tavakoli}, \citenamefont {Skrzypczyk}, \citenamefont {Cavalcanti},\ and\
  \citenamefont {Ac\'{i}n}}]{Tavakoli2014}%
  \BibitemOpen
  \bibfield  {author} {\bibinfo {author} {\bibfnamefont {A.}~\bibnamefont
  {Tavakoli}}, \bibinfo {author} {\bibfnamefont {P.}~\bibnamefont
  {Skrzypczyk}}, \bibinfo {author} {\bibfnamefont {D.}~\bibnamefont
  {Cavalcanti}}, \ and\ \bibinfo {author} {\bibfnamefont {A.}~\bibnamefont
  {Ac\'{i}n}},\ }\href {\doibase 10.1103/PhysRevA.90.062109} {\bibfield
  {journal} {\bibinfo  {journal} {Phys. Rev. A}\ }\textbf {\bibinfo {volume}
  {90}},\ \bibinfo {pages} {062109} (\bibinfo {year} {2014})}\BibitemShut
  {NoStop}%
\bibitem [{\citenamefont {Mukherjee}\ \emph {et~al.}(2015)\citenamefont
  {Mukherjee}, \citenamefont {Paul},\ and\ \citenamefont
  {Sarkar}}]{mukherjee2015correlations}%
  \BibitemOpen
  \bibfield  {author} {\bibinfo {author} {\bibfnamefont {K.}~\bibnamefont
  {Mukherjee}}, \bibinfo {author} {\bibfnamefont {B.}~\bibnamefont {Paul}}, \
  and\ \bibinfo {author} {\bibfnamefont {D.}~\bibnamefont {Sarkar}},\
  }\href@noop {} {\bibfield  {journal} {\bibinfo  {journal} {Quantum
  Information Processing}\ }\textbf {\bibinfo {volume} {14}},\ \bibinfo {pages}
  {2025} (\bibinfo {year} {2015})}\BibitemShut {NoStop}%
\bibitem [{\citenamefont {Chaves}(2016)}]{Chaves2016PRL}%
  \BibitemOpen
  \bibfield  {author} {\bibinfo {author} {\bibfnamefont {R.}~\bibnamefont
  {Chaves}},\ }\href {\doibase 10.1103/PhysRevLett.116.010402} {\bibfield
  {journal} {\bibinfo  {journal} {Phys. Rev. Lett.}\ }\textbf {\bibinfo
  {volume} {116}},\ \bibinfo {pages} {010402} (\bibinfo {year}
  {2016})}\BibitemShut {NoStop}%
\bibitem [{\citenamefont {Rosset}\ \emph {et~al.}(2016)\citenamefont {Rosset},
  \citenamefont {Branciard}, \citenamefont {Barnea}, \citenamefont {P\"utz},
  \citenamefont {Brunner},\ and\ \citenamefont {Gisin}}]{Rosset2016}%
  \BibitemOpen
  \bibfield  {author} {\bibinfo {author} {\bibfnamefont {D.}~\bibnamefont
  {Rosset}}, \bibinfo {author} {\bibfnamefont {C.}~\bibnamefont {Branciard}},
  \bibinfo {author} {\bibfnamefont {T.~J.}\ \bibnamefont {Barnea}}, \bibinfo
  {author} {\bibfnamefont {G.}~\bibnamefont {P\"utz}}, \bibinfo {author}
  {\bibfnamefont {N.}~\bibnamefont {Brunner}}, \ and\ \bibinfo {author}
  {\bibfnamefont {N.}~\bibnamefont {Gisin}},\ }\href {\doibase
  10.1103/PhysRevLett.116.010403} {\bibfield  {journal} {\bibinfo  {journal}
  {Phys. Rev. Lett.}\ }\textbf {\bibinfo {volume} {116}},\ \bibinfo {pages}
  {010403} (\bibinfo {year} {2016})}\BibitemShut {NoStop}%
\bibitem [{\citenamefont {Tavakoli}(2016{\natexlab{a}})}]{tavakoli2016quantum}%
  \BibitemOpen
  \bibfield  {author} {\bibinfo {author} {\bibfnamefont {A.}~\bibnamefont
  {Tavakoli}},\ }\href@noop {} {\bibfield  {journal} {\bibinfo  {journal}
  {Journal of Physics A: Mathematical and Theoretical}\ }\textbf {\bibinfo
  {volume} {49}},\ \bibinfo {pages} {145304} (\bibinfo {year}
  {2016}{\natexlab{a}})}\BibitemShut {NoStop}%
\bibitem [{\citenamefont {Tavakoli}(2016{\natexlab{b}})}]{Tavakoli2016}%
  \BibitemOpen
  \bibfield  {author} {\bibinfo {author} {\bibfnamefont {A.}~\bibnamefont
  {Tavakoli}},\ }\href {\doibase 10.1103/PhysRevA.93.030101} {\bibfield
  {journal} {\bibinfo  {journal} {Phys. Rev. A}\ }\textbf {\bibinfo {volume}
  {93}},\ \bibinfo {pages} {030101} (\bibinfo {year}
  {2016}{\natexlab{b}})}\BibitemShut {NoStop}%
\bibitem [{\citenamefont {Paul}\ \emph {et~al.}(2017)\citenamefont {Paul},
  \citenamefont {Mukherjee}, \citenamefont {Karmakar}, \citenamefont {Sarkar},
  \citenamefont {Mukherjee}, \citenamefont {Roy},\ and\ \citenamefont
  {Bhattacharya}}]{paul2017detection}%
  \BibitemOpen
  \bibfield  {author} {\bibinfo {author} {\bibfnamefont {B.}~\bibnamefont
  {Paul}}, \bibinfo {author} {\bibfnamefont {K.}~\bibnamefont {Mukherjee}},
  \bibinfo {author} {\bibfnamefont {S.}~\bibnamefont {Karmakar}}, \bibinfo
  {author} {\bibfnamefont {D.}~\bibnamefont {Sarkar}}, \bibinfo {author}
  {\bibfnamefont {A.}~\bibnamefont {Mukherjee}}, \bibinfo {author}
  {\bibfnamefont {A.}~\bibnamefont {Roy}}, \ and\ \bibinfo {author}
  {\bibfnamefont {S.~S.}\ \bibnamefont {Bhattacharya}},\ }\href@noop {}
  {\bibfield  {journal} {\bibinfo  {journal} {arXiv preprint arXiv:1701.04114}\
  } (\bibinfo {year} {2017})}\BibitemShut {NoStop}%
\bibitem [{\citenamefont {Tavakoli}\ \emph {et~al.}(2017)\citenamefont
  {Tavakoli}, \citenamefont {Renou}, \citenamefont {Gisin},\ and\ \citenamefont
  {Brunner}}]{tavakoli2017correlations}%
  \BibitemOpen
  \bibfield  {author} {\bibinfo {author} {\bibfnamefont {A.}~\bibnamefont
  {Tavakoli}}, \bibinfo {author} {\bibfnamefont {M.~O.}\ \bibnamefont {Renou}},
  \bibinfo {author} {\bibfnamefont {N.}~\bibnamefont {Gisin}}, \ and\ \bibinfo
  {author} {\bibfnamefont {N.}~\bibnamefont {Brunner}},\ }\href@noop {}
  {\bibfield  {journal} {\bibinfo  {journal} {arXiv preprint arXiv:1702.03866}\
  } (\bibinfo {year} {2017})}\BibitemShut {NoStop}%
\bibitem [{\citenamefont {Carvacho}\ \emph {et~al.}(2016)\citenamefont
  {Carvacho}, \citenamefont {Andreoli}, \citenamefont {Santodonato},
  \citenamefont {Bentivegna}, \citenamefont {Chaves},\ and\ \citenamefont
  {Sciarrino}}]{carvacho2016experimental}%
  \BibitemOpen
  \bibfield  {author} {\bibinfo {author} {\bibfnamefont {G.}~\bibnamefont
  {Carvacho}}, \bibinfo {author} {\bibfnamefont {F.}~\bibnamefont {Andreoli}},
  \bibinfo {author} {\bibfnamefont {L.}~\bibnamefont {Santodonato}}, \bibinfo
  {author} {\bibfnamefont {M.}~\bibnamefont {Bentivegna}}, \bibinfo {author}
  {\bibfnamefont {R.}~\bibnamefont {Chaves}}, \ and\ \bibinfo {author}
  {\bibfnamefont {F.}~\bibnamefont {Sciarrino}},\ }\href@noop {} {\bibfield
  {journal} {\bibinfo  {journal} {arXiv preprint arXiv:1610.03327}\ } (\bibinfo
  {year} {2016})}\BibitemShut {NoStop}%
\bibitem [{\citenamefont {Saunders}\ \emph {et~al.}(2016)\citenamefont
  {Saunders}, \citenamefont {Bennet}, \citenamefont {Branciard},\ and\
  \citenamefont {Pryde}}]{saunders2016experimental}%
  \BibitemOpen
  \bibfield  {author} {\bibinfo {author} {\bibfnamefont {D.~J.}\ \bibnamefont
  {Saunders}}, \bibinfo {author} {\bibfnamefont {A.~J.}\ \bibnamefont
  {Bennet}}, \bibinfo {author} {\bibfnamefont {C.}~\bibnamefont {Branciard}}, \
  and\ \bibinfo {author} {\bibfnamefont {G.~J.}\ \bibnamefont {Pryde}},\
  }\href@noop {} {\bibfield  {journal} {\bibinfo  {journal} {arXiv preprint
  arXiv:1610.08514}\ } (\bibinfo {year} {2016})}\BibitemShut {NoStop}%
\bibitem [{\citenamefont {Popescu}(1995)}]{Popescu1995}%
  \BibitemOpen
  \bibfield  {author} {\bibinfo {author} {\bibfnamefont {S.}~\bibnamefont
  {Popescu}},\ }\href {\doibase 10.1103/PhysRevLett.74.2619} {\bibfield
  {journal} {\bibinfo  {journal} {Phys. Rev. Lett.}\ }\textbf {\bibinfo
  {volume} {74}},\ \bibinfo {pages} {2619} (\bibinfo {year}
  {1995})}\BibitemShut {NoStop}%
\bibitem [{\citenamefont {Ringbauer}\ \emph {et~al.}(2016)\citenamefont
  {Ringbauer}, \citenamefont {Giarmatzi}, \citenamefont {Chaves}, \citenamefont
  {Costa}, \citenamefont {White},\ and\ \citenamefont
  {Fedrizzi}}]{Ringbauer2016}%
  \BibitemOpen
  \bibfield  {author} {\bibinfo {author} {\bibfnamefont {M.}~\bibnamefont
  {Ringbauer}}, \bibinfo {author} {\bibfnamefont {C.}~\bibnamefont
  {Giarmatzi}}, \bibinfo {author} {\bibfnamefont {R.}~\bibnamefont {Chaves}},
  \bibinfo {author} {\bibfnamefont {F.}~\bibnamefont {Costa}}, \bibinfo
  {author} {\bibfnamefont {A.~G.}\ \bibnamefont {White}}, \ and\ \bibinfo
  {author} {\bibfnamefont {A.}~\bibnamefont {Fedrizzi}},\ }\href@noop {}
  {\bibfield  {journal} {\bibinfo  {journal} {Science Advances}\ }\textbf
  {\bibinfo {volume} {2}},\ \bibinfo {pages} {e1600162} (\bibinfo {year}
  {2016})}\BibitemShut {NoStop}%
\bibitem [{\citenamefont {Fritz}(2012)}]{Fritz2012}%
  \BibitemOpen
  \bibfield  {author} {\bibinfo {author} {\bibfnamefont {T.}~\bibnamefont
  {Fritz}},\ }\href {http://stacks.iop.org/1367-2630/14/i=10/a=103001}
  {\bibfield  {journal} {\bibinfo  {journal} {New Journal of Physics}\ }\textbf
  {\bibinfo {volume} {14}},\ \bibinfo {pages} {103001} (\bibinfo {year}
  {2012})}\BibitemShut {NoStop}%
\bibitem [{\citenamefont {Wolfe}\ \emph {et~al.}(2016)\citenamefont {Wolfe},
  \citenamefont {Spekkens},\ and\ \citenamefont {Fritz}}]{Wolfe2016inflation}%
  \BibitemOpen
  \bibfield  {author} {\bibinfo {author} {\bibfnamefont {E.}~\bibnamefont
  {Wolfe}}, \bibinfo {author} {\bibfnamefont {R.~W.}\ \bibnamefont {Spekkens}},
  \ and\ \bibinfo {author} {\bibfnamefont {T.}~\bibnamefont {Fritz}},\
  }\href@noop {} {\bibfield  {journal} {\bibinfo  {journal} {arXiv preprint
  arXiv:1609.00672}\ } (\bibinfo {year} {2016})}\BibitemShut {NoStop}%
\bibitem [{\citenamefont {Hensen}\ \emph {et~al.}(2015)\citenamefont {Hensen},
  \citenamefont {Bernien}, \citenamefont {Dr{\'e}au}, \citenamefont {Reiserer},
  \citenamefont {Kalb}, \citenamefont {Blok}, \citenamefont {Ruitenberg},
  \citenamefont {Vermeulen}, \citenamefont {Schouten}, \citenamefont
  {Abell{\'a}n} \emph {et~al.}}]{Hensen2015}%
  \BibitemOpen
  \bibfield  {author} {\bibinfo {author} {\bibfnamefont {B.}~\bibnamefont
  {Hensen}}, \bibinfo {author} {\bibfnamefont {H.}~\bibnamefont {Bernien}},
  \bibinfo {author} {\bibfnamefont {A.}~\bibnamefont {Dr{\'e}au}}, \bibinfo
  {author} {\bibfnamefont {A.}~\bibnamefont {Reiserer}}, \bibinfo {author}
  {\bibfnamefont {N.}~\bibnamefont {Kalb}}, \bibinfo {author} {\bibfnamefont
  {M.}~\bibnamefont {Blok}}, \bibinfo {author} {\bibfnamefont {J.}~\bibnamefont
  {Ruitenberg}}, \bibinfo {author} {\bibfnamefont {R.}~\bibnamefont
  {Vermeulen}}, \bibinfo {author} {\bibfnamefont {R.}~\bibnamefont {Schouten}},
  \bibinfo {author} {\bibfnamefont {C.}~\bibnamefont {Abell{\'a}n}},  \emph
  {et~al.},\ }\href
  {http://www.nature.com/nature/journal/v526/n7575/abs/nature15759.html}
  {\bibfield  {journal} {\bibinfo  {journal} {Nature}\ }\textbf {\bibinfo
  {volume} {526}},\ \bibinfo {pages} {682} (\bibinfo {year}
  {2015})}\BibitemShut {NoStop}%
\bibitem [{\citenamefont {Giustina}\ \emph {et~al.}(2015)\citenamefont
  {Giustina} \emph {et~al.}}]{Giustina2015}%
  \BibitemOpen
  \bibfield  {author} {\bibinfo {author} {\bibfnamefont {M.}~\bibnamefont
  {Giustina}} \emph {et~al.},\ }\href {\doibase 10.1103/PhysRevLett.115.250401}
  {\bibfield  {journal} {\bibinfo  {journal} {Phys. Rev. Lett.}\ }\textbf
  {\bibinfo {volume} {115}},\ \bibinfo {pages} {250401} (\bibinfo {year}
  {2015})}\BibitemShut {NoStop}%
\bibitem [{\citenamefont {Shalm}\ \emph {et~al.}(2015)\citenamefont {Shalm}
  \emph {et~al.}}]{Shalm2015}%
  \BibitemOpen
  \bibfield  {author} {\bibinfo {author} {\bibfnamefont {L.~K.}\ \bibnamefont
  {Shalm}} \emph {et~al.},\ }\href {\doibase 10.1103/PhysRevLett.115.250402}
  {\bibfield  {journal} {\bibinfo  {journal} {Phys. Rev. Lett.}\ }\textbf
  {\bibinfo {volume} {115}},\ \bibinfo {pages} {250402} (\bibinfo {year}
  {2015})}\BibitemShut {NoStop}%
\bibitem [{\citenamefont {Hall}(2010)}]{Hall2010}%
  \BibitemOpen
  \bibfield  {author} {\bibinfo {author} {\bibfnamefont {M.~J.~W.}\
  \bibnamefont {Hall}},\ }\href {\doibase 10.1103/PhysRevLett.105.250404}
  {\bibfield  {journal} {\bibinfo  {journal} {Phys. Rev. Lett.}\ }\textbf
  {\bibinfo {volume} {105}},\ \bibinfo {pages} {250404} (\bibinfo {year}
  {2010})}\BibitemShut {NoStop}%
\bibitem [{\citenamefont {Chaves}\ \emph {et~al.}(2015)\citenamefont {Chaves},
  \citenamefont {Kueng}, \citenamefont {Brask},\ and\ \citenamefont
  {Gross}}]{Chaves2015PRL}%
  \BibitemOpen
  \bibfield  {author} {\bibinfo {author} {\bibfnamefont {R.}~\bibnamefont
  {Chaves}}, \bibinfo {author} {\bibfnamefont {R.}~\bibnamefont {Kueng}},
  \bibinfo {author} {\bibfnamefont {J.~B.}\ \bibnamefont {Brask}}, \ and\
  \bibinfo {author} {\bibfnamefont {D.}~\bibnamefont {Gross}},\ }\href
  {\doibase 10.1103/PhysRevLett.114.140403} {\bibfield  {journal} {\bibinfo
  {journal} {Phys. Rev. Lett.}\ }\textbf {\bibinfo {volume} {114}},\ \bibinfo
  {pages} {140403} (\bibinfo {year} {2015})}\BibitemShut {NoStop}%
\bibitem [{\citenamefont {Handsteiner}\ \emph {et~al.}(2017)\citenamefont
  {Handsteiner}, \citenamefont {Friedman}, \citenamefont {Rauch}, \citenamefont
  {Gallicchio}, \citenamefont {Liu}, \citenamefont {Hosp}, \citenamefont
  {Kofler}, \citenamefont {Bricher}, \citenamefont {Fink}, \citenamefont
  {Leung}, \citenamefont {Mark}, \citenamefont {Nguyen}, \citenamefont
  {Sanders}, \citenamefont {Steinlechner}, \citenamefont {Ursin}, \citenamefont
  {Wengerowsky}, \citenamefont {Guth}, \citenamefont {Kaiser}, \citenamefont
  {Scheidl},\ and\ \citenamefont {Zeilinger}}]{cosmic}%
  \BibitemOpen
  \bibfield  {author} {\bibinfo {author} {\bibfnamefont {J.}~\bibnamefont
  {Handsteiner}}, \bibinfo {author} {\bibfnamefont {A.~S.}\ \bibnamefont
  {Friedman}}, \bibinfo {author} {\bibfnamefont {D.}~\bibnamefont {Rauch}},
  \bibinfo {author} {\bibfnamefont {J.}~\bibnamefont {Gallicchio}}, \bibinfo
  {author} {\bibfnamefont {B.}~\bibnamefont {Liu}}, \bibinfo {author}
  {\bibfnamefont {H.}~\bibnamefont {Hosp}}, \bibinfo {author} {\bibfnamefont
  {J.}~\bibnamefont {Kofler}}, \bibinfo {author} {\bibfnamefont
  {D.}~\bibnamefont {Bricher}}, \bibinfo {author} {\bibfnamefont
  {M.}~\bibnamefont {Fink}}, \bibinfo {author} {\bibfnamefont {C.}~\bibnamefont
  {Leung}}, \bibinfo {author} {\bibfnamefont {A.}~\bibnamefont {Mark}},
  \bibinfo {author} {\bibfnamefont {H.~T.}\ \bibnamefont {Nguyen}}, \bibinfo
  {author} {\bibfnamefont {I.}~\bibnamefont {Sanders}}, \bibinfo {author}
  {\bibfnamefont {F.}~\bibnamefont {Steinlechner}}, \bibinfo {author}
  {\bibfnamefont {R.}~\bibnamefont {Ursin}}, \bibinfo {author} {\bibfnamefont
  {S.}~\bibnamefont {Wengerowsky}}, \bibinfo {author} {\bibfnamefont {A.~H.}\
  \bibnamefont {Guth}}, \bibinfo {author} {\bibfnamefont {D.~I.}\ \bibnamefont
  {Kaiser}}, \bibinfo {author} {\bibfnamefont {T.}~\bibnamefont {Scheidl}}, \
  and\ \bibinfo {author} {\bibfnamefont {A.}~\bibnamefont {Zeilinger}},\ }\href
  {\doibase 10.1103/PhysRevLett.118.060401} {\bibfield  {journal} {\bibinfo
  {journal} {Phys. Rev. Lett.}\ }\textbf {\bibinfo {volume} {118}},\ \bibinfo
  {pages} {060401} (\bibinfo {year} {2017})}\BibitemShut {NoStop}%
\bibitem [{bbt(2016)}]{bbt}%
  \BibitemOpen
  \href {http://thebigbelltest.org} {\enquote {\bibinfo {title} {The big bell
  test}}\ } (\bibinfo {year} {2016})\BibitemShut {NoStop}%
\bibitem [{\citenamefont {Michalski}\ \emph {et~al.}(2013)\citenamefont
  {Michalski}, \citenamefont {Carbonell},\ and\ \citenamefont
  {Mitchell}}]{michalski2013machine}%
  \BibitemOpen
  \bibfield  {author} {\bibinfo {author} {\bibfnamefont {R.~S.}\ \bibnamefont
  {Michalski}}, \bibinfo {author} {\bibfnamefont {J.~G.}\ \bibnamefont
  {Carbonell}}, \ and\ \bibinfo {author} {\bibfnamefont {T.~M.}\ \bibnamefont
  {Mitchell}},\ }\href@noop {} {\emph {\bibinfo {title} {Machine learning: An
  artificial intelligence approach}}}\ (\bibinfo  {publisher} {Springer Science
  \& Business Media},\ \bibinfo {year} {2013})\BibitemShut {NoStop}%
\bibitem [{\citenamefont {Clauser}\ \emph {et~al.}(1969)\citenamefont
  {Clauser}, \citenamefont {Horne}, \citenamefont {Shimony},\ and\
  \citenamefont {Holt}}]{Clauser1969}%
  \BibitemOpen
  \bibfield  {author} {\bibinfo {author} {\bibfnamefont {J.~F.}\ \bibnamefont
  {Clauser}}, \bibinfo {author} {\bibfnamefont {M.~A.}\ \bibnamefont {Horne}},
  \bibinfo {author} {\bibfnamefont {A.}~\bibnamefont {Shimony}}, \ and\
  \bibinfo {author} {\bibfnamefont {R.~A.}\ \bibnamefont {Holt}},\ }\href
  {\doibase 10.1103/PhysRevLett.23.880} {\bibfield  {journal} {\bibinfo
  {journal} {Phys. Rev. Lett.}\ }\textbf {\bibinfo {volume} {23}},\ \bibinfo
  {pages} {880} (\bibinfo {year} {1969})}\BibitemShut {NoStop}%
  \bibitem{Horodecki1995}
\bibinfo{author}{Horodecki, R.}, \bibinfo{author}{Horodecki, P.} \&
  \bibinfo{author}{Horodecki, M.}
\newblock \bibinfo{title}{Violating {B}ell inequality by mixed spin-12 states:
  necessary and sufficient condition}.
\newblock \emph{\bibinfo{journal}{Physics Letters A}}
  \textbf{\bibinfo{volume}{200}}, \bibinfo{pages}{340 -- 344}
  (\bibinfo{year}{1995})\BibitemShut {NoStop}%
\bibitem{Mukherjee2016}
\bibinfo{author}{Mukherjee, K.}, \bibinfo{author}{Paul, B.} \&
  \bibinfo{author}{Sarkar, D.}
\newblock \bibinfo{title}{Revealing advantage in a quantum network}.
\newblock \emph{\bibinfo{journal}{Quantum Information Processing}}
  \textbf{\bibinfo{volume}{15}}, \bibinfo{number}{7}, \bibinfo{pages}{2895--2921}
  (\bibinfo{year}{2016})
  \bibitem{Makhlin2002}
\bibinfo{author}{Makhlin, Y.}
\newblock \bibinfo{title}{Nonlocal Properties of Two-Qubit Gates and Mixed States, and the Optimization of Quantum Computations}.
\newblock \emph{\bibinfo{journal}{Quantum Information Processing}}
  \textbf{\bibinfo{volume}{1}},\bibinfo{number}{4}, \bibinfo{pages}{243 -- 252}
  (\bibinfo{year}{2002})
    \bibitem [{\citenamefont {Andreoli}\ and\ \citenamefont
  {Andreoli}(2017)}]{Andreoli}%
  \BibitemOpen
  \bibfield  {author} {\bibinfo {author} {\bibfnamefont {F.}\ \bibnamefont
  {Andreoli}},\ }\href@noop {} {\enquote {\bibinfo {title} {Generalized Non-Local Correlations in a Tripartite
Quantum Network}},\ } \bibinfo {year} {discussed in January 2017 under the supervision of Prof. Fabio Sciarrino}\BibitemShut {NoStop}%
  \BibitemShut {NoStop}%
  \bibitem [{\citenamefont {Gisin}\ \emph {et~al.}(2017)\citenamefont {Gisin},
  \citenamefont {Mei}, \citenamefont {Tavakoli}, \citenamefont {Renou},\ and\
  \citenamefont {Brunner}}]{gisin2017all}%
  \BibitemOpen
  \bibfield  {author} {\bibinfo {author} {\bibfnamefont {N.}~\bibnamefont
  {Gisin}}, \bibinfo {author} {\bibfnamefont {Q.}~\bibnamefont {Mei}}, \bibinfo
  {author} {\bibfnamefont {A.}~\bibnamefont {Tavakoli}}, \bibinfo {author}
  {\bibfnamefont {M.~O.}\ \bibnamefont {Renou}}, \ and\ \bibinfo {author}
  {\bibfnamefont {N.}~\bibnamefont {Brunner}},\ }\href@noop {} {\bibfield
  {journal} {\bibinfo  {journal} {arXiv preprint arXiv:1702.00333}\ } (\bibinfo
  {year} {2017}).} %
   
\end{thebibliography}


%

\end{document}